\documentclass[american,aps,pra,reprint,floatfix,nofootinbib,superscriptaddress]{revtex4-1}
\usepackage[unicode=true,pdfusetitle, bookmarks=true,bookmarksnumbered=false,bookmarksopen=false, breaklinks=false,pdfborder={0 0 0},backref=false,colorlinks=false]{hyperref}
\hypersetup{colorlinks,linkcolor=myurlcolor,citecolor=myurlcolor,urlcolor=myurlcolor}
\usepackage{graphics,epstopdf,graphicx,amsthm,amsmath,amssymb,mathptmx,braket,colortbl,color,bm,framed,mathrsfs}
\usepackage[T1]{fontenc}
\usepackage[up]{subfigure}
\usepackage{tikz}

\definecolor{myurlcolor}{rgb}{0,0,0.9}

\newcommand{\proj}[1]{| #1\rangle\!\langle #1 |}

\newcommand{\iinner}[2]{\langle #1 | #2\rangle}

\DeclareMathOperator{\trace}{Tr}
\newcommand{\Ptr}[2]{\trace_{#1}\Pa{#2}}
\newcommand{\Tr}[1]{\Ptr{}{#1}}

\newcommand{\Pa}[1]{\left[#1\right]}

\newcommand{\norm}[1]{\left\lVert #1 \right\rVert}

\theoremstyle{plain}
\newtheorem{thm}{Theorem}
\newtheorem{lem}[thm]{Lemma}
\newtheorem{prop}[thm]{Proposition}
\newtheorem{cor}[thm]{Corollary}

\newcommand*{\myproofname}{Proof}

\def\ot{\otimes}
\def\complex{\mathbb{C}}

\def\sgn{\mathrm{sgn}}

\DeclareMathAlphabet{\mathcal}{OMS}{cmsy}{m}{n}

\makeatother

\begin{document}

  \author{Kaifeng Bu}
 \email{kfbu@fas.harvard.edu}
 \affiliation{School of Mathematical Sciences, Zhejiang University, Hangzhou, Zhejiang 310027, China}
\affiliation{Department of Physics, Harvard University, Cambridge, Massachusetts 02138, USA}

  \author{Dax Enshan Koh}
  \email{daxkoh@mit.edu}
\affiliation{Department of Mathematics, Massachusetts Institute of Technology, Cambridge, Massachusetts 02139, USA}

\title{Efficient classical simulation of Clifford circuits with nonstabilizer input states}

\begin{abstract}
We investigate the problem of evaluating the output probabilities of Clifford circuits with nonstabilizer product input states. First, we consider the case when the input state is mixed, and give an efficient classical algorithm to approximate the output probabilities, with respect to the $l_1$ norm, of a large fraction of Clifford circuits. The running time of our algorithm decreases as the inputs become more mixed. Second, we consider the case when the input state is a pure nonstabilizer product state, and show that a similar efficient algorithm exists to approximate the output probabilities, when a suitable restriction is placed on the number of qubits measured. This restriction depends on a magic monotone that we call the Pauli rank.
We apply our results to give an efficient output probability approximation algorithm for some restricted quantum computation models, such as Clifford circuits with solely magic state inputs (CM), 
Pauli-based computation (PBC) and instantaneous quantum polynomial time (IQP) circuits.

\end{abstract}

\maketitle

\section{Introduction}

One of the main motivations behind the field of quantum computation is the expectation that quantum computers can solve certain problems much faster than classical computers. This expectation has been driven by the discovery of quantum algorithms which can solve certain problems believed to be intractable on a classical computer. A famous example of such a quantum algorithm is due to Shor, whose eponymous algorithm can solve the factoring problem exponentially faster than the best classical algorithms we know today \cite{shor1994algorithms, shor1999polynomial}.

With the advent of noisy intermediate-scale quantum (NISQ) devices \cite{preskill2018quantum}, an important near-term milestone in the field is to demonstrate that quantum computers are capable of performing computational tasks that classical computers cannot, a goal known as 
\textit{quantum supremacy} \cite{preskill2012quantum, harrow2017quantum}. Several restricted models of quantum computation have been proposed as candidates for demonstrating quantum supremacy. These include
boson sampling \cite{aaronson2011computational}, the one clean qubit model (DQC1) \cite{knill1998power, fujii2018impossibility}, instantaneous quantum polynomial-time (IQP) circuits \cite{bremner2010classical}, Hadamard-classical circuits with one qubit (HC1Q) \cite{Morimae2018merlinarthur}, Clifford circuits with magic initial states and nonadaptive measurements \cite{jozsa2014classical, koh2015further, yoganathan2018quantum}, the random circuit sampling model \cite{boixo2018characterizing,bouland2018onthecomplexity}, and conjugated Clifford circuits (CCC) \cite{bouland2018complexity}.
These models are potentially good candidates for quantum supremacy because they can solve sampling problems that are conjectured to be intractable for classical computers, and are conceivably easier to implement in experimental settings.

In contrast to the above models, quantum circuits with Clifford gates and stabilizer input states are not a candidate for quantum supremacy, because they can be efficiently simulated on a classical computer using the Gottesman-Knill simulation algorithm \cite{gottesman1997heisenberg}. The Gottesman-Knill algorithm, however, breaks down and efficient classical simulability can be proved to be impossible (under plausible assumptions) when Clifford circuits are modified in various ways, under various notions of simulation \cite{jozsa2014classical, koh2015further, bouland2018complexity, yoganathan2018quantum}. For example, it can be proved under plausible complexity assumptions that no efficient classical sampling algorithm exists that can sample from the output distributions of  Clifford circuits with general product state inputs when the number of measurements made is of order $O(n)$ \cite{jozsa2014classical}.

In this paper, we present two new efficient classical algorithms for approximately evaluating the output probabilities of Clifford circuits with nonstabilizer inputs. 
Our first algorithm shows that the output distribution of Clifford circuits with mixed product states can be efficiently approximated, with respect to the $l_1$ norm, for a large fraction of Clifford circuits. This algorithm explicitly reveals the role of mixedness of the input states in affecting the running time of the simulation, which decreases as the inputs become more mixed.

Our second algorithm shows that such an efficient approximation algorithm still exists in the case where the inputs are pure nonstabilizer states, as long as we impose a suitable restriction on the number of measured qubits. This restriction depends on a magic monotone called the Pauli rank that we introduce in this paper. 
This algorithm also explicitly links the simulation time to the amount of magic in the input states, and implies that for Clifford circuits with magic input states, it is possible in certain cases to achieve an efficient classical approximation
of the output probability even when $O(n)$ qubits are measured. This is in contrast to the hardness result in \cite{jozsa2014classical}, which shows that sampling from those output probabilities is hard. Finally, we apply our results to give an efficient approximation algorithm for some restricted quantum computation models, like Clifford circuits with solely magic state inputs (CM), 
Pauli-based computation (PBC) and instantaneous quantum polynomial time (IQP) circuits.

\section{Main results}

Let $P^n$ be the set of all Hermitian Pauli operators on $n$ qubits, i.e., operators that can be written as
the $n$-fold tensor product of the single-qubit Pauli operators $\set{I, X, Y, Z}$ with sign $\pm 1$. 
The Clifford unitaries on $n$ qubits are the unitaries that maps Pauli operators to Pauli operators, that is,
$\mathcal{C}l_n=\set{U\in U(2^n): UPU^\dag\in P^n, \forall P\in P^n}$.  Stabilizer states are pure states of the form $U\ket{0}^{\ot n}$ \cite{Scott2004}, where $U$ is some Clifford unitary.

Here, we consider Clifford circuits with product input states $\proj{0}^{\ot n}\ot^m_{i=1} \rho_i$, and measurements on $k$ qubits.
If either $m$ or $k$ is $O(\log n)$, the output probabilities  can be efficiently simulated classically by the Gottesman-Knill theorem 
\cite{gottesman1997heisenberg,jozsa2014classical}. 
However, if both $m$ and $k$ are greater than $O(\log n)$, we show that the output probability of such circuits can still be 
approximated efficiently with respect to the $l_1$ norm for a large fraction of Clifford circuits.

 \begin{figure}[!h]
  \center{\includegraphics[width=5cm]  {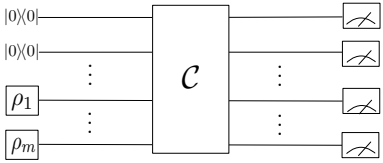}}     
  \caption{A circuit diagram of Clifford circuits with product state inputs, which could be either pure or mixed.}
 \end{figure}

\subsection{Mixed input states}
We first
consider the case where
all $\rho_i$ are mixed states and give an efficient classical algorithm to approximate the output 
probabilities.

\begin{thm}\label{thm:main}
Given a Clifford  circuit $\mathcal{C}$ on $n+m$ qubits with input state
$\proj{0}^{\ot n}\ot^{m}_{i=1} \rho_i$  and measurement on each qubit in the computational basis,
 there exists a classical algorithm 
to approximate the output probabilities of the circuit up to $l_1$ norm $\delta$ in time 
$(n+m)^{O(1)}m^{O(\log(\sqrt{\alpha}/\delta)/\lambda)}$ for at least $1-\frac{2}{\alpha}$ fraction of circuits $\mathcal{C}$, 
where $\lambda=\min\set{\lambda_i}_i$, with $\lambda_i=1-\sqrt{2\Tr{\rho^2_i}-1}$, is a measure of the mixedness of the input state $\rho_i$.
\end{thm}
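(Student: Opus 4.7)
The plan is to expand the input in the Pauli basis, truncate the expansion to low-weight Pauli strings, and bound the resulting $l_1$ error in expectation over the Clifford group using the fact that $\mathcal{C}l_{n+m}$ forms a unitary 2-design; Markov's inequality then converts the expectation bound into a statement about a large fraction of circuits. Concretely, I would write each qubit as $\rho_i = \frac{1}{2}\sum_P c_P^{(i)} P$ with $c_I^{(i)}=1$ and $\sum_{P\neq I}(c_P^{(i)})^2 = 2\Tr{\rho_i^2}-1 = (1-\lambda_i)^2$, so that $\ot_{i=1}^m \rho_i = \frac{1}{2^m}\sum_{\vec P} c_{\vec P}\vec P$ with $c_{\vec P} = \prod_i c_{P_i}^{(i)}$, giving
\begin{equation*}
p(x) = \frac{1}{2^m}\sum_{\vec P} c_{\vec P}\,\chi_{\vec P}(x,\mathcal{C}), \quad \chi_{\vec P}(x,\mathcal{C}) := \Tr{\proj{x}\mathcal{C}(\proj{0}^{\ot n}\ot\vec P)\mathcal{C}^{\dagger}}.
\end{equation*}
Since $\mathcal{C}^{\dagger}\ket{x}$ is a stabilizer state and $\proj{0}^{\ot n}\ot\vec P$ has a short expansion as a linear combination of Paulis, each $\chi_{\vec P}(x,\mathcal{C})$ is computable in $\mathrm{poly}(n+m)$ time via the Gottesman-Knill formalism, and truncating the sum to $\abs{\vec P}\leq t$ retains $\sum_{k\leq t}\binom{m}{k}3^k = m^{O(t)}$ terms, producing an estimator $\tilde p(x)$ computable in time $(n+m)^{O(1)}m^{O(t)}$.

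The heart of the proof is bounding $\mathbb{E}_{\mathcal{C}}\|p-\tilde p\|_1^2$. Cauchy-Schwarz gives $\|p-\tilde p\|_1^2 \leq 2^{n+m}\|p-\tilde p\|_2^2$, so I would compute the expected squared $l_2$ error instead. Expanding the square yields a double sum over $\abs{\vec P},\abs{\vec Q}>t$, and each Clifford-averaged cross term equals its Haar twirl $\mathbb{E}_U U^{\ot 2}(\sigma_{\vec P}\ot\sigma_{\vec Q})(U^{\dagger})^{\ot 2} = \alpha I + \beta S$ (with $\sigma_{\vec P} := \proj{0}^{\ot n}\ot\vec P$), whose coefficients depend only on $\Tr{\sigma_{\vec P}}$, $\Tr{\sigma_{\vec Q}}$, and $\Tr{\sigma_{\vec P}\sigma_{\vec Q}}$. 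The first two vanish for $\vec P,\vec Q\neq \vec I$, and the third equals $2^m\delta_{\vec P,\vec Q}$ by Pauli orthogonality on the input qubits, so the double sum collapses to a diagonal sum over $\vec P = \vec Q$ and a short calculation yields $\mathbb{E}_{\mathcal{C}}\|p-\tilde p\|_1^2 \leq \frac{1}{2^m}\sum_{\abs{\vec P}>t} c_{\vec P}^2$.

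To control the tail I would use the generating identity $\sum_{\vec P} c_{\vec P}^2 z^{\abs{\vec P}} = \prod_i(1+(1-\lambda_i)^2 z)$ combined with the Markov-type bound $\sum_{\abs{\vec P}>t} c_{\vec P}^2 \leq z^{-t}\sum_{\vec P} c_{\vec P}^2 z^{\abs{\vec P}}$ for $z\geq 1$. Choosing $z = (1-\lambda)^{-2}$ keeps each factor $(1-\lambda_i)^2 z\leq 1$, giving $\sum_{\abs{\vec P}>t} c_{\vec P}^2 \leq 2^m(1-\lambda)^{2t}$ and hence $\mathbb{E}_{\mathcal{C}}\|p-\tilde p\|_1^2 \leq (1-\lambda)^{2t} \leq e^{-2\lambda t}$. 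Markov's inequality then yields $\|p-\tilde p\|_1 \leq \delta$ on at least a $1-2/\alpha$ fraction of Cliffords once $t = \Theta(\log(\sqrt{\alpha}/\delta)/\lambda)$, and substituting this $t$ into $m^{O(t)}$ gives the claimed $(n+m)^{O(1)}m^{O(\log(\sqrt{\alpha}/\delta)/\lambda)}$ runtime. The main obstacle I expect is the 2-design moment calculation: $\sigma_{\vec P}$ is not itself a Pauli (the projector $\proj{0}^{\ot n}$ expands into all $2^n$ $Z$-type Paulis), so one must verify carefully that the cross terms really vanish and track the factors of $2^n$ and $2^m$ through the twirl; a secondary subtlety is the non-uniformity of $\{\lambda_i\}$, which the generating-function bound absorbs by the choice $z = (1-\lambda)^{-2}$ with $\lambda = \min_i \lambda_i$, at the cost of a $2^m$ factor that is exactly cancelled by the $1/2^m$ prefactor.
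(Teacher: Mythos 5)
Your proposal is correct, and its skeleton coincides with the paper's: your terms $\tfrac{1}{2^m}c_{\vec P}\,\chi_{\vec P}(x,\mathcal{C})$ are exactly the paper's Fourier coefficients $\hat q_{\vec s,\vec t}(\vec y)$ (obtained there by inserting $X^{\vec b}Z^{\vec a}$ on the $m$ input registers and Fourier-transforming over $(\vec a,\vec b)$), each evaluated by Gottesman--Knill, truncated at Pauli weight $l$, with the Clifford 2-design plus Markov closing the argument. Where you genuinely diverge is the second-moment bound: the paper does not expand the square directly; it uses the invariance $\hat q_{\vec s,\vec t}[U](-1)^{\vec a\cdot\vec s+\vec b\cdot\vec t}=\hat q_{\vec s,\vec t}[U\circ Z^{\vec a}X^{\vec b}]$ to replace $\mathbb{E}_U\norm{q'-q}_1^2$ by an average over Pauli insertions, then applies Cauchy--Schwarz and Parseval and compares against a reference distribution built from the decomposition $\rho_i=(1-\lambda_i)\sigma_i+\tfrac{\lambda_i}{2}I$, whose second moment is what the 2-design bounds; you instead expand $\mathbb{E}_U\norm{p-\tilde p}_2^2$ termwise and let the twirl kill the cross terms via $\Tr{\sigma_{\vec P}}=0$ and $\Tr{\sigma_{\vec P}\sigma_{\vec Q}}=2^m\delta_{\vec P,\vec Q}$, controlling the tail with the purity generating function $\prod_i\bigl(1+(1-\lambda_i)^2 z\bigr)$ rather than the coefficientwise bound $\abs{\rho_{st}}\le(1-\lambda)^{w(s,t)}\abs{\sigma_{st}}$. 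Your route is somewhat more direct (no auxiliary family $q_{\vec a,\vec b}$, no reference pure state), and the obstacle you flag is not one: since only $\Tr{\sigma_{\vec P}}$, $\Tr{\sigma_{\vec Q}}$ and $\Tr{\sigma_{\vec P}\sigma_{\vec Q}}$ enter the 2-design formula, it is irrelevant that $\proj{0}^{\ot n}$ is not a single Pauli, and the diagonal collapse indeed yields $\mathbb{E}_U\norm{p-\tilde p}_1^2\le 2^{-m}\sum_{\abs{\vec P}>t}c_{\vec P}^2\le(1-\lambda)^{2t}$, matching the paper's $e^{-\Theta(\lambda l)}$ decay. One wording fix: $\proj{0}^{\ot n}\ot\vec P$ does not have a short Pauli expansion (it has $2^n$ terms); what makes $\chi_{\vec P}$ efficiently computable is that it equals $\bra{x}\prod_{i=1}^{n}\tfrac{I+UZ_iU^\dag}{2}\,U(I^{\ot n}\ot\vec P)U^\dag\ket{x}$, a product of commuting Pauli projectors times a Pauli, which the stabilizer formalism evaluates in $O((n+m)^3)$ time---precisely the paper's argument.
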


The proof of the Theorem is presented in Appendix \ref{sec:mix}. 
The theorem shows that the efficiency of the classical simulation increases with the mixedness 
of the input states.

Next, we show that the result in Theorem \ref{thm:main} can be easily generalized to quantum circuits $\mathcal{C}$ which are slightly beyond Clifford circuits.
To this end, we consider the Clifford hierarchy, a class of operations introduced by Gottesman and Chuang \cite{gottesman1999demonstrating} that has important applications in fault-tolerant quantum computation and teleportation-based state injection.
 Let 
$\mathcal{C}l^{(3)}_{n}$ be the third level of the Clifford Hierarchy, i.e., 
$\mathcal{C}l^{(3)}_{n}=\set{U\in U(2^n): UPU^\dag\in \mathcal{C}l_n, \forall P\in P^n }$.
There are several important gates in the third level of Clifford Hierarchy, such as the $\pi/8$ gate (which we denote $T$) and the $CCZ$ gate  \cite{zeng2008semi}. 
(Note that the set $\mathcal{C}l^{(3)}_{n}$ is not closed under multiplication. For example, 
 $TH, T\in \mathcal{C}l^{(3)}_{n}$, but $THT\notin  \mathcal{C}l^{(3)}_{n}$.)
The following corollary shows that adding gates in $\mathcal{C}l^{(3)}$ to the circuits in Theorem \ref{thm:main} does not change (up to polynomial overhead) the efficiency of the classical simulation. 

\begin{cor}
Let $\mathcal{C}=\mathcal{C}_1\circ V$ be a quantum circuit with input states 
$\proj{0}^{\ot n}\ot^m_{i=1} \rho_i$, where the gates in the circuit $\mathcal{C}_1$ are taken from the 
set of Clifford gates on $n+m$ qubits $\mathcal{C}l_{n+m}$ and $V$ is taken from the third level of Clifford hierarchy
 $\mathcal{C}l^{(3)}_{m}$ acting on $n+1,...,n+m$-th qubits. Assume that each each qubit is measured in the computational basis.
Then, Theorem \ref{thm:main} stil holds if we replace $\mathcal C$ in Theorem \ref{thm:main} with $\mathcal C$ defined above.
\end{cor}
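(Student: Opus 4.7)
The plan is to mimic the proof of Theorem~\ref{thm:main} after using the defining property of $\mathcal{C}l^{(3)}$ to absorb the gate $V$ into the Pauli expansion of the input. Expanding each single-qubit input in the Pauli basis as $\bigotimes_i \rho_i = 2^{-m} \sum_{P \in P^m} c_P\, P$ with $c_P = \prod_i \Tr{P_i \rho_i}$, and substituting into the output probability, gives
\[
p(x) = 2^{-m} \sum_{P \in P^m} c_P \, \langle x | \mathcal{C}_1 \bigl(\proj{0}^{\otimes n} \otimes V P V^\dag\bigr) \mathcal{C}_1^\dag | x \rangle,
\]
which is precisely the expression appearing in the proof of Theorem~\ref{thm:main}, except that each Pauli $P$ is replaced by $V P V^\dag$.

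The key observation is that, by definition of $V \in \mathcal{C}l^{(3)}_m$, the operator $W_P := V P V^\dag$ is a Clifford unitary on the last $m$ qubits. If $V$ is specified via the Cliffords $V X_i V^\dag$ and $V Z_i V^\dag$ for the $2m$ Pauli generators (a $\mathrm{poly}(m)$-size tableau), then $W_P$ can be assembled for any $P$ in $\mathrm{poly}(m)$ time by multiplying Cliffords. Each summand $\langle x | \mathcal{C}_1 (\proj{0}^{\otimes n} \otimes W_P) \mathcal{C}_1^\dag | x \rangle$ is then an inner product of the stabilizer vectors $(\proj{0}^{\otimes n} \otimes I_m)\mathcal{C}_1^\dag|x\rangle$ and $(I_n \otimes W_P^\dag)\mathcal{C}_1^\dag|x\rangle$, computable in $\mathrm{poly}(n+m)$ time by standard Gottesman--Knill methods. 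Feeding these matrix-element evaluations into the truncation/sampling subroutine from the proof of Theorem~\ref{thm:main} (Appendix~\ref{sec:mix}), and observing that the coefficients $c_P$ and the random-$\mathcal{C}_1$ concentration bounds behind the ``$1 - 2/\alpha$ fraction of circuits'' guarantee depend only on the mixedness parameters $\lambda_i$ and not on $V$, the error $\delta$, the fraction of good circuits, and the stated runtime are all inherited verbatim, up to polynomial overhead for the $W_P$ computations.

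The main obstacle will be verifying that the second-moment / concentration estimate behind the $1 - 2/\alpha$ tail bound really is unchanged when the Pauli $P$ is replaced by the Clifford $W_P$ inside the matrix element. In the Pauli case one leans on the fact that $\mathcal{C}_1 P \mathcal{C}_1^\dag$ is a uniformly random nonidentity Pauli for uniform $\mathcal{C}_1$ (via the 1- and 2-design properties of the Clifford group); here one instead has $\mathcal{C}_1 (I_n \otimes W_P) \mathcal{C}_1^\dag$, a uniform Clifford conjugate of a fixed Clifford on $m$ qubits. A brief 2-design calculation should recover the same moment bound, but confirming this is the part of the adaptation that deserves the most care.
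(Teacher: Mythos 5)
Your proposal is correct and follows essentially the same route as the paper: the whole argument rests on the observation that $V P V^{\dag}$ is a Clifford unitary for $V\in\mathcal{C}l^{(3)}_{m}$, so each Fourier coefficient becomes a matrix element of the form $\bra{\vec{y}}U\proj{0}^{\ot n}U'\ket{\vec{y}}$ with $U,U'$ Clifford, computable in $O((n+m)^3)$ time by Gottesman--Knill, and the truncation, decay, and Markov steps of Theorem~\ref{thm:main} are inherited verbatim. The concentration step you flag as the delicate point is in fact immediate in the paper's organization of the proof: the 2-design bound (Lemma~\ref{lem:2-des}) is applied not to individual Pauli matrix elements but, via Parseval, to the pure-input probabilities $p_{\vec{a},\vec{b}}$, and since $V$ acts before the random Clifford $U$ it merely replaces the input state by its (still pure, still normalized) conjugate $V(\cdot)V^{\dag}$, so the traces entering the second-moment bound $\mathbb{E}_{U}p^2_{\vec{a},\vec{b}}(\vec{y})\leq 2\cdot 2^{-2(n+m)}$ --- and hence the $1-\frac{2}{\alpha}$ guarantee --- are unchanged (one also checks that the shift identity \eqref{eq:cli_eq} survives because $V Z^{\vec{a}}X^{\vec{b}}V^{\dag}$ is Clifford and the uniform measure on $\mathcal{C}l_{n+m}$ is right-invariant).
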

The key property we use here is that the gates in the third level of the Clifford Hierarchy
map Pauli operators to Clifford unitaries, which makes the proof of Theorem \ref{thm:main} still hold. (See a discussion of this in Appendix \ref{sec:mix}. )
Although $\mathcal{C}l^{(3)}_n$ is not a group, the diagonal gates in 
 $\mathcal{C}l^{(3)}_n$, denoted as $\mathcal{C}l^{(3)}_{n,d}$, forms  a group \cite{zeng2008semi,Cui2017}. Since the $T$ gate and $CCZ$ gate both belong to 
 $\mathcal{C}l^{(3)}_{n,d}$, the result in Theorem \ref{thm:main} still holds for the quantum circuits $\mathcal{C}=\mathcal{C}_1\circ\mathcal{C}_2$ where
 gates in $\mathcal{C}_1$ and $\mathcal{C}_2$ are chosen from $\mathcal{C}_{n+m}$ and $ \mathcal{C}l^{(3)}_{m,d}$ respectively.

Since noise is inevitable in real physical experiments, it is important to consider the effects of noise in quantum computation.  Recently, it has been demonstrated that if there 
is some noise on the random quantum gates \cite{Gao2018} or measurements of IQP circuits \cite{Bremner2017quantum}, then 
there exists an efficient classical simulation
of  the output distribution of 
quantum circuits. In the rest of this subsection, we apply our results to two important subuniversal quantum circuits with  noisy  input states and give an efficient
classical approximation  algorithm for the output probabilities of the corresponding quantum circuits.

\textit{\textbf{Example 1}}---First, we consider Clifford circuits with magic input states.
 \begin{figure}[!h]
  \center{\includegraphics[width=5cm]  {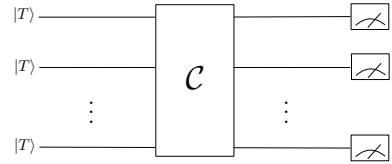}}  
  \caption{An example of a CM circuit}
 \end{figure}
It is well known that the Clifford + $T$ gate set is universal for quantum computation. By magic state injection, circuits with this gate set can be efficiently simulated by Clifford circuits with magic state
$\ket{T}$ inputs, where $\ket{T}=\frac{1}{\sqrt{2}}(\ket{0}+e^{i\pi/4}\ket{1})$.
It has been shown that 
 $\mathsf{postCM}=\mathsf{postBQP}$ \cite{yoganathan2018quantum}, and thus output probabilities are $\#\mathsf P$-hard approximate up to some constant relative error \cite{Kuperberg2015,FujiiNJP2017,Hangleiter2018quant}. 
However, if there is some independent depolarizing error acting on each input magic state, e.g., the input state on each register is
$(1-\epsilon)\proj{T}+\epsilon\frac{I}{2}$, then Theorem \ref{thm:main} implies directly that 
 there exists a classical algorithm 
to approximate the output probability up to $l_1$ norm $\delta$ in time 
$n^{O(\log(1/\delta)/\epsilon)}$ for a large fraction of the CM circuits with noisy inputs.

\textit{\textbf{Example 2}}---IQP circuits have a simple structure with input states $\ket{0}^{\ot n}$ and gates of the form $H^{\ot n} D H^{\ot n}$, 
where the diagonal gates in $D$  are chosen from the gate set $\set{Z, S, T, CZ}$.
 \begin{figure}[!h]
  \center{ \includegraphics[width=5cm]  {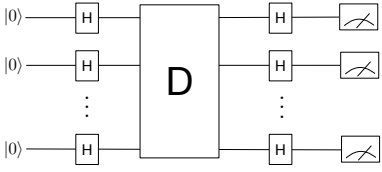}} 
  \caption{An example of an IQP circuit.}
 \end{figure}
 It has been shown that $\mathsf{postIQP}= \mathsf{postBQP}$ \cite{bremner2010classical} and thus, the output probabilities  are $\#\mathsf P$-hard to
approximate up to some constant relative error \cite{Kuperberg2015,FujiiNJP2017,Hangleiter2018quant}. 
Also, if there is some depolarizing noise acting on each input state $\ket{0}$, i.e., each input state is a mixed state  $(1-\epsilon)\proj{0}+\epsilon\frac{I}{2}$, then Theorem \ref{thm:main} implies that 
 there exists a classical algorithm 
to approximate the output probability up to $l_1$ norm $\delta$ in time 
$n^{O(\log(1/\delta)/\epsilon)}$ for a large fraction  of such IQP circuits. (The proof is presented in Appendix \ref{sec:IQP} in detail, which depends on the 
output distribution of IQP circuits in Appendix \ref{sec:dis_IQP}. )

\subsection{Pure nonstabilizer input states}
As we can see, the running time in Theorem \ref{thm:main} blows up if the input state
$\rho_i$ is pure. Here, we consider the case where all $\rho_i$ are pure nonstabilizer states, that is
Clifford gates with the input state $\ket{0}^{\ot n}\ot^m_{i=1}\ket{\psi_i}$.

For pure states $\ket{\psi}$,  the  stabilizer fidelity \cite{Bravyi2018} is defined as follows
\begin{eqnarray}
F(\psi)=\max_{\ket{\phi}}|\iinner{\phi}{\psi}|^2,
\end{eqnarray}
where the maximization is taken over all stabilizer states. 
Here, we define
\begin{eqnarray}\label{eq:mu}
\mu(\psi):=2(1-F(\psi)).
\end{eqnarray}
It is easy to see that $\mu(\psi)=0$ iff $\ket{\psi}$ is a stabilizer state. 
Thus, $\mu$ quantifies the distance between a given state to the set of stabilizer states. 
Since each $\ket{\psi_i}$ is not a stabilizer state, it follows that  $\mu(\psi_i)>0$.

Next, let us introduce the Pauli rank for pure single qubit states $\ket{\psi}$. First, we write a pure state
$\ket{\psi}$ in terms of its Bloch sphere representation 
$\proj{\psi}=\frac{1}{2}\sum_{s,t\in\set{0,1}}\psi_{st}X^sZ^t$, where
$\psi_{00}=1$ and $|\psi_{01}|^2+|\psi_{10}|^2+|\psi_{11}|^2=1$. 
We define the Pauli rank $\chi(\psi)$ to be the number of nonzero
coefficients $\psi_{st}$. 
By the definition of Pauli rank, it is easy to see that $2\leq \chi(\psi)\leq 4$, and
that $\ket{\psi}$ is a stabilizer state iff $\chi(\psi)=2$. Since each  input state $\ket{\psi_i}$
is a nonstabilizer state, it follows that $\chi(\psi_i)=3~ \text{or} ~ 4$. For example, for the magic state $\ket{T}$, 
the corresponding Pauli rank $\chi=3$.
For $n$-qubit systems, the Pauli rank serves as a good candidate for a magic monotone as it is easier to compute than other
magic monotones which require a minimization over all stabilizer states \cite{BravyiPRL2016,HowardPRL2017,Veitch2014}.
(See a discussion of Pauli rank for $n$-qubit systems in Appendix \ref{sec:non_stab}.)

\begin{thm}\label{thm:main2}
Given a Clifford circuit $\mathcal{C}$ on $n+m$ qubits  with input state 
$\ket{0}^{\ot n}\ot^m_{i=1} \ket{\psi_i}$   and measurements on $k$  qubits in the computational basis with 
$k\leq n+m-\sum^m_{i=1}\log_2(\chi(\psi_i)/2)$ and $\chi(\psi_i)$ being the Pauli rank of 
$\psi_i$,
 there exists a classical algorithm 
to approximate the output probability up to $l_1$ norm $\delta$ in time 
$(n+m)^{O(1)}m^{O(\log(\sqrt{\alpha}/\delta)/\mu)}$  for at least a $1-\frac{2}{\alpha}$ fraction of  Clifford circuits $\mathcal{C}$, 
where $\mu:=\min_i \mu(\psi_i)$ and $\mu(\psi_i)$ is defined as \eqref{eq:mu}. 
\end{thm}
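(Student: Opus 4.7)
The plan is to mirror the argument of Theorem \ref{thm:main}, replacing the role of intrinsic mixedness by the effective averaging produced by measuring only $k$ of the $n+m$ qubits. I would start from
$p(y)=\Tr{(\proj{y}\ot \I_{n+m-k})\,\mathcal{C}(\proj{0}^{\ot n}\ot^m_{i=1}\proj{\psi_i})\mathcal{C}^\dagger}$
and substitute the Bloch-sphere expansion $\proj{\psi_i}=\tfrac{1}{2}\sum_{s,t}\psi^{(i)}_{st}X^{s}Z^{t}$, which by definition has at most $\chi(\psi_i)$ nonzero terms. This turns $p(y)$ into a sum over Pauli strings on the $m$ magic qubits, each weighted by $\prod_i \psi^{(i)}_{s_i t_i}/2^m$ and multiplied by a Clifford trace that is evaluable in $(n+m)^{O(1)}$ time via Gottesman--Knill. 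The constraint $k\leq n+m-\sum_i \log_2(\chi(\psi_i)/2)$ is exactly what makes the total number of Pauli strings commensurate with the effective dimension $2^{n+m-k}$ of the measured subsystem, which is the structural condition needed for the mixed-state machinery of Theorem \ref{thm:main} to apply.

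For the truncation, I would use the stabilizer-fidelity monotone: writing $\proj{\psi_i}=\proj{\phi_i}+\Delta_i$, where $\ket{\phi_i}$ is the closest stabilizer state, one has $\norm{\Delta_i}_1=\sqrt{2\mu(\psi_i)}$. Expanding the product over $i$ into $2^m$ terms indexed by $S\subseteq[m]$ (with $i\in S$ meaning that $\proj{\psi_i}$ is replaced by $\Delta_i$) and truncating to $|S|\leq L$, the central claim is that for a uniformly random Clifford $\mathcal{C}\in\mathcal{C}l_{n+m}$, the expected $l_1$-error from the truncation is suppressed by a factor exponential in $L$ with rate set by $\mu$, via the Clifford 2-design moment estimates already used in the proof of Theorem \ref{thm:main}. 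This should give a truncation depth $L=O(\log(\sqrt{\alpha}/\delta)/\mu)$ and $\binom{m}{\leq L}=m^{O(L)}$ surviving terms, matching the stated running time. A Markov inequality then converts the expected error into the ``$1-2/\alpha$ fraction of Clifford circuits'' statement, as in Theorem \ref{thm:main}.

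The hard part will be verifying the two places where the argument genuinely departs from the mixed case. First, one must show that the constraint $k\leq n+m-\sum_i \log_2(\chi(\psi_i)/2)$ is tight enough that the Pauli expansion remains tractable after the partial trace over the unmeasured qubits, yet slack enough that the Clifford 2-design concentration is not lost. Second, one must reconcile the two a priori different monotones in play: the Pauli rank $\chi(\psi_i)$ controls the size of the Pauli expansion, while the stabilizer-fidelity monotone $\mu(\psi_i)$ controls the truncation error, and the proof must align them so that a single $\mu$-dependent rate governs the final running time. I expect most of the technical bookkeeping to go into this translation, since in the mixed case $\lambda_i$ carried both roles at once.
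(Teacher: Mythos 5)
Your overall skeleton (Pauli expansion of the inputs, Gottesman--Knill evaluation of each coefficient, truncation, Clifford 2-design second moments, Markov) is the paper's, but the specific truncation you propose --- writing $\proj{\psi_i}=\proj{\phi_i}+\Delta_i$ with $\ket{\phi_i}$ the closest stabilizer state and truncating the product expansion over subsets $S\subseteq[m]$ at $|S|\leq L$ --- leaves a genuine gap exactly at your ``central claim.'' That decomposition is not orthogonal, so there is no Parseval identity available. In the paper the error analysis never counts discarded terms at all: one inserts Pauli shifts $X^{\vec b}Z^{\vec a}$, uses the identity $\hat q_{\vec s,\vec t}(\vec y)[U](-1)^{\vec a\cdot\vec s+\vec b\cdot\vec t}=\hat q_{\vec s,\vec t}(\vec y)[U\circ Z^{\vec a}X^{\vec b}]$ to trade the fixed circuit for an average over shifts, and then Parseval converts $\mathbb{E}_{\vec a,\vec b}\bigl(q'_{\vec a,\vec b}-q_{\vec a,\vec b}\bigr)^2$ into a sum of squared Fourier coefficients, which is bounded by $(1-\mu)^{2l}$ times an \emph{exact} average that the 2-design controls. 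In your scheme the cross terms $\mathbb{E}_U T_S T_{S'}$ for $S\neq S'$ do not vanish, and even the diagonal terms reintroduce combinatorial sums of the type $\sum_{|S|>L}\binom{m}{|S|}\prod_{i\in S}\mu_i$, which are exponentially large in $m$ and must be beaten by hand against the $2^{-(n+m)-k}$ second-moment scale; whether that bookkeeping closes, with the advertised $\mu$-rate and the advertised range of $k$, is precisely what you have not shown, and a naive normalization of the $\Delta_i$'s lands you at a more restrictive condition (roughly $k\lesssim n$) than the theorem claims.

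Relatedly, your stated mechanism for the hypothesis $k\leq n+m-\sum_i\log_2(\chi(\psi_i)/2)$ --- that it makes the number of Pauli strings ``commensurate with the effective dimension'' --- is not the actual mechanism, and nothing in your outline connects the Pauli rank to the error bound. The paper's bridge between the two monotones is a reference operator $O(\psi)=\frac{1}{2}\bigl(I+\sgn(|\psi_{10}|)X+\sgn(|\psi_{01}|)Z+\sgn(|\psi_{11}|)\,iXZ\bigr)$, which satisfies $\Tr{O(\psi)^2}=\chi(\psi)/2$ and $|\psi_{st}|\leq(1-\mu(\psi))^{w(s,t)}|O_{st}|$; the truncated-tail is bounded by $(1-\mu)^{2l}$ times the Parseval average of the reference quantities, whose 2-design mean is at most $2^{-(n+m)-k}\prod_i\frac{\chi(\psi_i)}{2}+2^{-2k}$, and the hypothesis on $k$ is exactly what makes the first term at most the second, giving $\mathbb{E}_U\norm{q'_{\vec 0,\vec 0}-q_{\vec 0,\vec 0}}_1^2\leq 2e^{-2\mu l}$ and hence $l=O(\log(\sqrt{\alpha}/\delta)/\mu)$. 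Without this (or an equivalent) alignment of $\chi$ and $\mu$ inside a single second-moment estimate, your argument does not yield the stated running time or measurement bound; I would either switch to the weight-$l$ Fourier truncation with the reference-operator comparison, or else supply a full cross-term analysis for your subset decomposition.
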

The proof is presented in Appendix \ref{sec:non_stab}.
The maximal number of allowed measured qubits in this algorithm decreases with the amount of the magic in the input states, which is quantified by the Pauli rank. 
Curiously, the running time of this algorithm 
scales with the decrease in the amount of magic of the input states quantified by fidelity. This is contrary to the intuition that quantum circuits with more magic are harder to simulate.
Similarly, if the quantum circuits are slightly beyond the Clifford circuits, for example, 
$\mathcal{C}=\mathcal{C}_1\circ V$  where the gates in $\mathcal{C}_1$
are Clifford gates in $\mathcal{C}l_{n+m}$ and $V$ is some unitary gate in the third level of the Clifford Hierarchy $\mathcal{C}l^{(3)}_m$, then the result in 
Theorem \ref{thm:main2} still holds. 

Combining Theorem \ref{thm:main} and \ref{thm:main2}, we have 
the following corollary for any product input state:
\begin{cor}
Let $\mathcal{C}$ be a Clifford  circuit  on $n+m_1+m_2$ qubits with input states 
$\proj{0}^{\ot n}\ot^{m_1}_{i=1}\rho_i\ot^{m_2}_{j=1} \proj{\psi_j}$, where each $\rho_i$ is a mixed state, and each $\ket{\psi_j}$ is a pure
nonstabilizer state. Assume that measurements are performed on  $k$  qubits in the computational basis,  where
$k\leq n+m_1+m_2-\sum^{m_2}_{j=1}\log_2(\chi(\psi_i)/2)$ and $\chi(\psi_i)$ is the Pauli rank of 
$\psi_i$. Then,
 there exists a classical algorithm 
to approximate the output probability with respect to the $l_1$ norm $\delta$ in time 
$(n+m_1+m_2)^{O(1)}(m_1+m_2)^{O(\log(\sqrt{\alpha}/\delta)/\epsilon)}$  for at least $1-\frac{2}{\alpha}$ fraction of  Clifford circuits $\mathcal{C}$, 
where $\epsilon=\min\set{\lambda, \mu}$ and $\lambda:=\min_i\lambda_i,\  \mu:=\min_j \mu(\psi_j)$. 
\end{cor}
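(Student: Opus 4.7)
The plan is to apply the Pauli-expansion strategy underlying both Theorem \ref{thm:main} and Theorem \ref{thm:main2} to the mixed and pure factors of the input uniformly. First I would write the output probability as $p(x) = \Tr{|x\rangle\langle x|\, \mathcal{C}(|0\rangle\langle 0|^{\otimes n} \otimes_{i=1}^{m_1} \rho_i \otimes_{j=1}^{m_2} |\psi_j\rangle\langle\psi_j|)}$ and expand every single-qubit factor of the input in the Pauli basis. For each mixed register $i$, all four single-qubit Pauli coefficients are present but their $l_2$ mass is controlled by $\lambda_i$ exactly as in the proof of Theorem \ref{thm:main}; for each pure register $j$, only $\chi(\psi_j)\leq 4$ coefficients are nonzero and their $l_2$ mass is controlled by $\mu(\psi_j)$ exactly as in the proof of Theorem \ref{thm:main2}. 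The full input state thus becomes a weighted sum of Pauli tensor products whose weights factorize across the $m_1+m_2$ nontrivial registers.

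Next, by the argument used in Theorem \ref{thm:main2}, the hypothesis $k\leq n+m_1+m_2-\sum_{j}\log_2(\chi(\psi_j)/2)$ is exactly what is required so that, after conjugation by the Clifford $\mathcal{C}$, each Pauli monomial yields a well-defined and efficiently computable contribution to the marginal on the $k$ measured qubits. Each such contribution can then be evaluated in $(n+m_1+m_2)^{O(1)}$ time by the Gottesman-Knill theorem, since Cliffords map Paulis to Paulis and expectation values of Paulis in the computational basis reduce to a simple parity computation.

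I would then perform a single truncation and Monte-Carlo step combining both strategies. Using the Markov-type average-over-Cliffords bound that underlies both theorems, with probability at least $1-2/\alpha$ over a Haar-random $\mathcal{C}$ it suffices to keep only those Pauli monomials whose product of per-qubit coefficients exceeds a threshold depending on $\delta$ and $\alpha$. Because each per-qubit $l_2$ contribution is bounded uniformly by the worst single-qubit parameter $\epsilon=\min\{\lambda,\mu\}$, the number of surviving monomials is $(m_1+m_2)^{O(\log(\sqrt{\alpha}/\delta)/\epsilon)}$, which gives the claimed overall runtime once multiplied by the Gottesman-Knill cost per term.

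The main obstacle I anticipate is the bookkeeping when merging the two truncation analyses into a single $l_1$-error bound and a single average-over-circuits estimate. In Theorem \ref{thm:main} the relevant per-qubit $l_2$ mass is controlled through $2\Tr{\rho_i^2}-1$, while in Theorem \ref{thm:main2} it is controlled through the stabilizer fidelity $F(\psi_j)$; the task is to unify these two expressions into a single product-over-registers $l_2$ bound in which $\epsilon$ plays the role of a uniform single-qubit suppression factor, and to union-bound the two ``bad-circuit'' sets inside a single $2/\alpha$-Markov estimate rather than handling them separately. Once this unification is achieved, the rest of the argument is a direct transcription of the two parent proofs.
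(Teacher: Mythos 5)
Your overall route (single-qubit Pauli expansion, truncation, second-moment bound over random Cliffords, Markov) is the same as the paper's, which obtains this corollary simply by running the proofs of Theorems \ref{thm:main} and \ref{thm:main2} simultaneously. However, your sketch has a genuine gap exactly at the step you defer as ``bookkeeping,'' and it is compounded by a misstatement of what the hypothesis on $k$ does. The constraint $k\leq n+m_1+m_2-\sum_j\log_2(\chi(\psi_j)/2)$ has nothing to do with whether individual Pauli monomials give ``well-defined and efficiently computable'' contributions: every Fourier coefficient $\hat q_{\vec s,\vec t}(\vec y)$ is computable in $O((n+m_1+m_2)^3)$ time by Gottesman--Knill for any $k$. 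Its actual role is in the truncation-error analysis. One introduces a hybrid reference: for each mixed register the pure state $\sigma_i$ from $\rho_i=(1-\lambda_i)\sigma_i+\tfrac{\lambda_i}{2}I$, and for each pure register the reference Hermitian operator $O(\psi_j)$ with $\Tr{O(\psi_j)^2}=\chi(\psi_j)/2$, so that $|\hat q_{\vec s,\vec t}|\leq(1-\epsilon)^{w(\vec s,\vec t)}|\hat o_{\vec s,\vec t}|$. The exact 2-design property of the uniform Clifford distribution (Lemma \ref{lem:2-des}) then gives $\mathbb{E}_{U}\, o^2_{\vec a,\vec b}(\vec y)\leq 2^{-(n+m_1+m_2)-k}\prod_j \chi(\psi_j)/2+2^{-2k}$, and the hypothesis on $k$ is precisely what makes the first term at most $2^{-2k}$, so that after Cauchy--Schwarz over the $2^k$ outcomes the truncation error stays $O((1-\epsilon)^{2l})$. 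Without locating the $k$-condition there, the $l_1$ bound does not close: for larger $k$ the reference second moment genuinely overwhelms $2^{-2k}$ and the argument fails.

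Two further corrections. First, there is no Monte-Carlo step: the algorithm deterministically keeps all Fourier coefficients of weight $w(\vec s,\vec t)\leq l$ with $l=O(\log(\sqrt\alpha/\delta)/\epsilon)$, of which there are at most $3^l(m_1+m_2)^l$, each evaluated exactly; a magnitude-threshold rule is neither needed nor what yields the stated count. Second, you should not union-bound two separate bad-circuit sets, as that would degrade the guarantee to $1-4/\alpha$. Because the hybrid reference treats both kinds of registers in one second-moment estimate (note $\Tr{\sigma_i^2}=1$, which is exactly why only the pure nonstabilizer registers contribute $\chi(\psi_j)/2$ factors and hence why only they appear in the bound on $k$), a single application of Markov's inequality to $\mathbb{E}_U\norm{q'_{\vec 0,\vec 0}-q_{\vec 0,\vec 0}}^2_1\leq 2e^{-2\epsilon l}$ gives the $1-\tfrac{2}{\alpha}$ fraction directly. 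These unification steps are precisely the content of the corollary's proof; supplying them is what remains, and the rest of your outline then matches the paper.
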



Now, let us apply our results to some restricted quantum computation models, such as 
Clifford circuits with solely magic state inputs (CM) and Pauli-based measurement (PBC), 
which gives an efficient simulation of $O(n)$ measurement with 
high probability.

\textit{\textbf{Example 3}}---Theorem \ref{thm:main2}  implies the following result: 
for Clifford circuit $\mathcal{C}$ with input states $ \ket{T}^{\ot n}$ and measurement
on $k$ qubits in computational basis with $k\leq (1-\log_{2}(3/2))n\approx 0.415n$, there exists a classical algorithm 
to approximate the output probability up to $l_1$ norm $\delta$ in time 
$n^{O((2+\sqrt{2})\log(\sqrt{\alpha}/\delta))}$  for at least $1-\frac{2}{\alpha}$ fraction of  Clifford circuits $\mathcal{C}$, 
where $\mu(\ket{T})=1-\frac{1}{\sqrt{2}}$ and $\chi(\ket{T})=3$. This may be contrasted with the hardness result ruling out efficient classical sampling from this class of circuits \cite{yoganathan2018quantum}.

\textit{\textbf{Example 4}}---A Pauli-Based Computation (PBC) is defined as a sequence of measurement of 
some Pauli operators $P_i\in P^n$, where the measurement outcome is
$(-1)^{\sigma_i} $ with $\sigma_i\in \set{0,1} $ and the Pauli operators $\set{P_i}$ are 
commuting with each other. Here, the initial state is 
$\ket{T}$ (or $\ket{H}=\cos\frac{\pi}{8}\ket{0}+\sin\frac{\pi}{8}\ket{1}$, which is equivalent 
to $\ket{T}$ up to Clifford unitary \cite{BravyiPRX16}.).
After $k$ steps, the probability 
of outcome $P(\sigma_1,\ldots,\sigma_k)=\bra{T^{\ot n}}{\Pi}\ket{T^{\ot n}}$,
where $\Pi=2^{-k}\prod^k_{i=1}(I+(-1)^{\sigma_i}P_i)$.
Note that PBC was considered in the fault-tolerant implementation of quantum computation based on stabilizer codes, 
where the stabilizer codes provide a simple realization of nondestructive Pauli measurements  \cite{Gottesman1998,Steane1997}. 
Besides, it has been proved that the quantum computation based on Clifford+$T$ circuits can be simulated by PBC \cite{BravyiPRX16}. Thus, this implies that the output probability  $P(\sigma_1,\ldots,\sigma_k)$ is $\#\mathsf P$-hard to simulate. 
It has been shown that any PBC on $n$ qubits can be classically simulated in $2^{cn}poly(n)$ time with $c\approx 0.94$ \cite{BravyiPRX16}.
Here, Theorem \ref{thm:main2} implies that if the measurement steps $k\leq (1-\log_2(3/2))n\approx 0.415n$, then
there exists a classical algorithm 
to approximate the output probability up to $l_1$ norm $\delta$ in time 
$n^{O((2+\sqrt{2})\log(1/\delta))}$ for a large fraction of PBC.

\section{Conclusion}
In this work, we investigated the problem of evaluating the output probabilities of Clifford  circuits with nonstabilizer input states. 
First, we provided an efficient classical algorithm to approximate the output probability of the Clifford circuits with 
mixed input states and showed that the running time scales with the increase in the purity of input states.
Second, we showed that a modification of this algorithm gives an efficient classical simulation for
pure nonstabilizer states, under some restriction on the number of measured qubits that is determined by the Pauli rank of 
the input states. The Pauli rank we introduced in this work can be regarded as a good candidate for a
magic monotone. 
We showed that these two results have several implications in other restricted quantum computation models such as 
Clifford circuits with magic input states, Pauli-based computation and IQP circuits. 

\begin{acknowledgments}
K. B. thanks Xun Gao for introducing the tensor network representation of quantum circuits to him, and for fruitful discussions related to this topic. K.B. acknowledges the Templeton Religion Trust for the partial support of this research under grants TRT0159
and   Zhejiang University for the support of an Academic Award for
Outstanding Doctoral Candidates. D.E.K. is funded by EPiQC, an NSF Expedition in Computing, under grant CCF-1729369.
\end{acknowledgments}

\bibliographystyle{apsrev4-1}
 \bibliography{Nosyinp-lit}

\begin{thebibliography}{36}%
\makeatletter
\providecommand \@ifxundefined [1]{%
 \@ifx{#1\undefined}
}%
\providecommand \@ifnum [1]{%
 \ifnum #1\expandafter \@firstoftwo
 \else \expandafter \@secondoftwo
 \fi
}%
\providecommand \@ifx [1]{%
 \ifx #1\expandafter \@firstoftwo
 \else \expandafter \@secondoftwo
 \fi
}%
\providecommand \natexlab [1]{#1}%
\providecommand \enquote  [1]{``#1''}%
\providecommand \bibnamefont  [1]{#1}%
\providecommand \bibfnamefont [1]{#1}%
\providecommand \citenamefont [1]{#1}%
\providecommand \href@noop [0]{\@secondoftwo}%
\providecommand \href [0]{\begingroup \@sanitize@url \@href}%
\providecommand \@href[1]{\@@startlink{#1}\@@href}%
\providecommand \@@href[1]{\endgroup#1\@@endlink}%
\providecommand \@sanitize@url [0]{\catcode `\\12\catcode `\$12\catcode
  `\&12\catcode `\#12\catcode `\^12\catcode `\_12\catcode `\%12\relax}%
\providecommand \@@startlink[1]{}%
\providecommand \@@endlink[0]{}%
\providecommand \url  [0]{\begingroup\@sanitize@url \@url }%
\providecommand \@url [1]{\endgroup\@href {#1}{\urlprefix }}%
\providecommand \urlprefix  [0]{URL }%
\providecommand \Eprint [0]{\href }%
\providecommand \doibase [0]{http://dx.doi.org/}%
\providecommand \selectlanguage [0]{\@gobble}%
\providecommand \bibinfo  [0]{\@secondoftwo}%
\providecommand \bibfield  [0]{\@secondoftwo}%
\providecommand \translation [1]{[#1]}%
\providecommand \BibitemOpen [0]{}%
\providecommand \bibitemStop [0]{}%
\providecommand \bibitemNoStop [0]{.\EOS\space}%
\providecommand \EOS [0]{\spacefactor3000\relax}%
\providecommand \BibitemShut  [1]{\csname bibitem#1\endcsname}%
\let\auto@bib@innerbib\@empty
\bibitem [{\citenamefont {Shor}(1994)}]{shor1994algorithms}%
  \BibitemOpen
  \bibfield  {author} {\bibinfo {author} {\bibfnamefont {P.~W.}\ \bibnamefont
  {Shor}},\ }\href {https://doi.org/10.1109/SFCS.1994.365700} {\bibfield
  {journal} {\bibinfo  {journal} {FOCS}\ }\textbf {\bibinfo {volume} {35}},\
  \bibinfo {pages} {124} (\bibinfo {year} {1994})}\BibitemShut {NoStop}%
\bibitem [{\citenamefont {Shor}(1999)}]{shor1999polynomial}%
  \BibitemOpen
  \bibfield  {author} {\bibinfo {author} {\bibfnamefont {P.~W.}\ \bibnamefont
  {Shor}},\ }\href {https://doi.org/10.1137/S0036144598347011} {\bibfield
  {journal} {\bibinfo  {journal} {SIAM review}\ }\textbf {\bibinfo {volume}
  {41}},\ \bibinfo {pages} {303} (\bibinfo {year} {1999})}\BibitemShut
  {NoStop}%
\bibitem [{\citenamefont {Preskill}(2018)}]{preskill2018quantum}%
  \BibitemOpen
  \bibfield  {author} {\bibinfo {author} {\bibfnamefont {J.}~\bibnamefont
  {Preskill}},\ }\href {\doibase 10.22331/q-2018-08-06-79} {\bibfield
  {journal} {\bibinfo  {journal} {{Quantum}}\ }\textbf {\bibinfo {volume}
  {2}},\ \bibinfo {pages} {79} (\bibinfo {year} {2018})}\BibitemShut {NoStop}%
\bibitem [{\citenamefont {Preskill}()}]{preskill2012quantum}%
  \BibitemOpen
  \bibfield  {author} {\bibinfo {author} {\bibfnamefont {J.}~\bibnamefont
  {Preskill}},\ }\href@noop {} {\ }\Eprint {http://arxiv.org/abs/1203.5813}
  {arXiv:1203.5813} \BibitemShut {NoStop}%
\bibitem [{\citenamefont {Harrow}\ and\ \citenamefont
  {Montanaro}(2017)}]{harrow2017quantum}%
  \BibitemOpen
  \bibfield  {author} {\bibinfo {author} {\bibfnamefont {A.~W.}\ \bibnamefont
  {Harrow}}\ and\ \bibinfo {author} {\bibfnamefont {A.}~\bibnamefont
  {Montanaro}},\ }\href {https://doi.org/10.1038/nature23458} {\bibfield
  {journal} {\bibinfo  {journal} {Nature}\ }\textbf {\bibinfo {volume} {549}},\
  \bibinfo {pages} {203} (\bibinfo {year} {2017})}\BibitemShut {NoStop}%
\bibitem [{\citenamefont {Aaronson}\ and\ \citenamefont
  {Arkhipov}(2011)}]{aaronson2011computational}%
  \BibitemOpen
  \bibfield  {author} {\bibinfo {author} {\bibfnamefont {S.}~\bibnamefont
  {Aaronson}}\ and\ \bibinfo {author} {\bibfnamefont {A.}~\bibnamefont
  {Arkhipov}},\ }in\ \href@noop {} {\emph {\bibinfo {booktitle} {Proceedings of
  the forty-third annual ACM symposium on Theory of computing}}}\ (\bibinfo
  {organization} {ACM},\ \bibinfo {year} {2011})\ pp.\ \bibinfo {pages}
  {333--342}\BibitemShut {NoStop}%
\bibitem [{\citenamefont {Knill}\ and\ \citenamefont
  {Laflamme}(1998)}]{knill1998power}%
  \BibitemOpen
  \bibfield  {author} {\bibinfo {author} {\bibfnamefont {E.}~\bibnamefont
  {Knill}}\ and\ \bibinfo {author} {\bibfnamefont {R.}~\bibnamefont
  {Laflamme}},\ }\href {\doibase 10.1103/PhysRevLett.81.5672} {\bibfield
  {journal} {\bibinfo  {journal} {Phys. Rev. Lett.}\ }\textbf {\bibinfo
  {volume} {81}},\ \bibinfo {pages} {5672} (\bibinfo {year}
  {1998})}\BibitemShut {NoStop}%
\bibitem [{\citenamefont {Fujii}\ \emph {et~al.}(2018)\citenamefont {Fujii},
  \citenamefont {Kobayashi}, \citenamefont {Morimae}, \citenamefont
  {Nishimura}, \citenamefont {Tamate},\ and\ \citenamefont
  {Tani}}]{fujii2018impossibility}%
  \BibitemOpen
  \bibfield  {author} {\bibinfo {author} {\bibfnamefont {K.}~\bibnamefont
  {Fujii}}, \bibinfo {author} {\bibfnamefont {H.}~\bibnamefont {Kobayashi}},
  \bibinfo {author} {\bibfnamefont {T.}~\bibnamefont {Morimae}}, \bibinfo
  {author} {\bibfnamefont {H.}~\bibnamefont {Nishimura}}, \bibinfo {author}
  {\bibfnamefont {S.}~\bibnamefont {Tamate}}, \ and\ \bibinfo {author}
  {\bibfnamefont {S.}~\bibnamefont {Tani}},\ }\href {\doibase
  10.1103/PhysRevLett.120.200502} {\bibfield  {journal} {\bibinfo  {journal}
  {Phys. Rev. Lett.}\ }\textbf {\bibinfo {volume} {120}},\ \bibinfo {pages}
  {200502} (\bibinfo {year} {2018})}\BibitemShut {NoStop}%
\bibitem [{\citenamefont {Bremner}\ \emph {et~al.}(2010)\citenamefont
  {Bremner}, \citenamefont {Jozsa},\ and\ \citenamefont
  {Shepherd}}]{bremner2010classical}%
  \BibitemOpen
  \bibfield  {author} {\bibinfo {author} {\bibfnamefont {M.~J.}\ \bibnamefont
  {Bremner}}, \bibinfo {author} {\bibfnamefont {R.}~\bibnamefont {Jozsa}}, \
  and\ \bibinfo {author} {\bibfnamefont {D.~J.}\ \bibnamefont {Shepherd}},\
  }\href {https://doi.org/10.1098/rspa.2010.0301} {\bibfield  {journal}
  {\bibinfo  {journal} {Proc. Roy. Soc. London Ser. A}\ }\textbf {\bibinfo
  {volume} {467}},\ \bibinfo {pages} {459} (\bibinfo {year}
  {2010})}\BibitemShut {NoStop}%
\bibitem [{\citenamefont {Morimae}\ \emph {et~al.}(2018)\citenamefont
  {Morimae}, \citenamefont {Takeuchi},\ and\ \citenamefont
  {Nishimura}}]{Morimae2018merlinarthur}%
  \BibitemOpen
  \bibfield  {author} {\bibinfo {author} {\bibfnamefont {T.}~\bibnamefont
  {Morimae}}, \bibinfo {author} {\bibfnamefont {Y.}~\bibnamefont {Takeuchi}}, \
  and\ \bibinfo {author} {\bibfnamefont {H.}~\bibnamefont {Nishimura}},\ }\href
  {\doibase 10.22331/q-2018-11-15-106} {\bibfield  {journal} {\bibinfo
  {journal} {{Quantum}}\ }\textbf {\bibinfo {volume} {2}},\ \bibinfo {pages}
  {106} (\bibinfo {year} {2018})}\BibitemShut {NoStop}%
\bibitem [{\citenamefont {Jozsa}\ and\ \citenamefont {Van~den
  Nest}(2014)}]{jozsa2014classical}%
  \BibitemOpen
  \bibfield  {author} {\bibinfo {author} {\bibfnamefont {R.}~\bibnamefont
  {Jozsa}}\ and\ \bibinfo {author} {\bibfnamefont {M.}~\bibnamefont {Van~den
  Nest}},\ }\href {https://doi.org/10.26421/QIC14.7-8} {\bibfield  {journal}
  {\bibinfo  {journal} {Quantum Information \& Computation}\ }\textbf {\bibinfo
  {volume} {14}},\ \bibinfo {pages} {633} (\bibinfo {year} {2014})}\BibitemShut
  {NoStop}%
\bibitem [{\citenamefont {Koh}(2017)}]{koh2015further}%
  \BibitemOpen
  \bibfield  {author} {\bibinfo {author} {\bibfnamefont {D.~E.}\ \bibnamefont
  {Koh}},\ }\href {https://doi.org/10.26421/QIC17.3-4} {\bibfield  {journal}
  {\bibinfo  {journal} {Quantum Information \& Computation}\ }\textbf {\bibinfo
  {volume} {17}},\ \bibinfo {pages} {0262} (\bibinfo {year}
  {2017})}\BibitemShut {NoStop}%
\bibitem [{\citenamefont {Yoganathan}\ \emph {et~al.}()\citenamefont
  {Yoganathan}, \citenamefont {Jozsa},\ and\ \citenamefont
  {Strelchuk}}]{yoganathan2018quantum}%
  \BibitemOpen
  \bibfield  {author} {\bibinfo {author} {\bibfnamefont {M.}~\bibnamefont
  {Yoganathan}}, \bibinfo {author} {\bibfnamefont {R.}~\bibnamefont {Jozsa}}, \
  and\ \bibinfo {author} {\bibfnamefont {S.}~\bibnamefont {Strelchuk}},\
  }\href@noop {} {\ }\Eprint {http://arxiv.org/abs/1806.03200}
  {arXiv:1806.03200} \BibitemShut {NoStop}%
\bibitem [{\citenamefont {Boixo}\ \emph {et~al.}(2018)\citenamefont {Boixo},
  \citenamefont {Isakov}, \citenamefont {Smelyanskiy}, \citenamefont {Babbush},
  \citenamefont {Ding}, \citenamefont {Jiang}, \citenamefont {Bremner},
  \citenamefont {Martinis},\ and\ \citenamefont
  {Neven}}]{boixo2018characterizing}%
  \BibitemOpen
  \bibfield  {author} {\bibinfo {author} {\bibfnamefont {S.}~\bibnamefont
  {Boixo}}, \bibinfo {author} {\bibfnamefont {S.~V.}\ \bibnamefont {Isakov}},
  \bibinfo {author} {\bibfnamefont {V.~N.}\ \bibnamefont {Smelyanskiy}},
  \bibinfo {author} {\bibfnamefont {R.}~\bibnamefont {Babbush}}, \bibinfo
  {author} {\bibfnamefont {N.}~\bibnamefont {Ding}}, \bibinfo {author}
  {\bibfnamefont {Z.}~\bibnamefont {Jiang}}, \bibinfo {author} {\bibfnamefont
  {M.~J.}\ \bibnamefont {Bremner}}, \bibinfo {author} {\bibfnamefont {J.~M.}\
  \bibnamefont {Martinis}}, \ and\ \bibinfo {author} {\bibfnamefont
  {H.}~\bibnamefont {Neven}},\ }\href
  {https://doi.org/10.1038/s41567-018-0124-x} {\bibfield  {journal} {\bibinfo
  {journal} {Nature Physics}\ }\textbf {\bibinfo {volume} {14}},\ \bibinfo
  {pages} {595} (\bibinfo {year} {2018})}\BibitemShut {NoStop}%
\bibitem [{\citenamefont {Bouland}\ \emph
  {et~al.}(2018{\natexlab{a}})\citenamefont {Bouland}, \citenamefont
  {Fefferman}, \citenamefont {Nirkhe},\ and\ \citenamefont
  {Vazirani}}]{bouland2018onthecomplexity}%
  \BibitemOpen
  \bibfield  {author} {\bibinfo {author} {\bibfnamefont {A.}~\bibnamefont
  {Bouland}}, \bibinfo {author} {\bibfnamefont {B.}~\bibnamefont {Fefferman}},
  \bibinfo {author} {\bibfnamefont {C.}~\bibnamefont {Nirkhe}}, \ and\ \bibinfo
  {author} {\bibfnamefont {U.}~\bibnamefont {Vazirani}},\ }\href
  {https://doi.org/10.1038/s41567-018-0318-2} {\bibfield  {journal} {\bibinfo
  {journal} {Nature Physics}\ ,\ \bibinfo {pages} {1}} (\bibinfo {year}
  {2018}{\natexlab{a}})}\BibitemShut {NoStop}%
\bibitem [{\citenamefont {Bouland}\ \emph
  {et~al.}(2018{\natexlab{b}})\citenamefont {Bouland}, \citenamefont
  {Fitzsimons},\ and\ \citenamefont {Koh}}]{bouland2018complexity}%
  \BibitemOpen
  \bibfield  {author} {\bibinfo {author} {\bibfnamefont {A.}~\bibnamefont
  {Bouland}}, \bibinfo {author} {\bibfnamefont {J.~F.}\ \bibnamefont
  {Fitzsimons}}, \ and\ \bibinfo {author} {\bibfnamefont {D.~E.}\ \bibnamefont
  {Koh}},\ }in\ \href {\doibase 10.4230/LIPIcs.CCC.2018.21} {\emph {\bibinfo
  {booktitle} {33rd Computational Complexity Conference (CCC 2018)}}},\
  \bibinfo {series} {Leibniz International Proceedings in Informatics
  (LIPIcs)}, Vol.\ \bibinfo {volume} {102},\ \bibinfo {editor} {edited by\
  \bibinfo {editor} {\bibfnamefont {R.~A.}\ \bibnamefont {Servedio}}}\
  (\bibinfo  {publisher} {Schloss Dagstuhl--Leibniz-Zentrum fuer Informatik},\
  \bibinfo {address} {Dagstuhl, Germany},\ \bibinfo {year} {2018})\ pp.\
  \bibinfo {pages} {21:1--21:25}\BibitemShut {NoStop}%
\bibitem [{\citenamefont {Gottesman}(1999)}]{gottesman1997heisenberg}%
  \BibitemOpen
  \bibfield  {author} {\bibinfo {author} {\bibfnamefont {D.}~\bibnamefont
  {Gottesman}},\ }\href@noop {} {\bibfield  {journal} {\bibinfo  {journal}
  {Group22: Proceedings of the XXII International Colloquium on Group
  Theoretical Methods in Physics}\ ,\ \bibinfo {pages} {32}} (\bibinfo {year}
  {1999})}\BibitemShut {NoStop}%
\bibitem [{\citenamefont {Aaronson}\ and\ \citenamefont
  {Gottesman}(2004)}]{Scott2004}%
  \BibitemOpen
  \bibfield  {author} {\bibinfo {author} {\bibfnamefont {S.}~\bibnamefont
  {Aaronson}}\ and\ \bibinfo {author} {\bibfnamefont {D.}~\bibnamefont
  {Gottesman}},\ }\href {\doibase 10.1103/PhysRevA.70.052328} {\bibfield
  {journal} {\bibinfo  {journal} {Phys. Rev. A}\ }\textbf {\bibinfo {volume}
  {70}},\ \bibinfo {pages} {052328} (\bibinfo {year} {2004})}\BibitemShut
  {NoStop}%
\bibitem [{\citenamefont {Gottesman}\ and\ \citenamefont
  {Chuang}(1999)}]{gottesman1999demonstrating}%
  \BibitemOpen
  \bibfield  {author} {\bibinfo {author} {\bibfnamefont {D.}~\bibnamefont
  {Gottesman}}\ and\ \bibinfo {author} {\bibfnamefont {I.~L.}\ \bibnamefont
  {Chuang}},\ }\href {https://doi.org/10.1038/46503} {\bibfield  {journal}
  {\bibinfo  {journal} {Nature}\ }\textbf {\bibinfo {volume} {402}},\ \bibinfo
  {pages} {390} (\bibinfo {year} {1999})}\BibitemShut {NoStop}%
\bibitem [{\citenamefont {Zeng}\ \emph {et~al.}(2008)\citenamefont {Zeng},
  \citenamefont {Chen},\ and\ \citenamefont {Chuang}}]{zeng2008semi}%
  \BibitemOpen
  \bibfield  {author} {\bibinfo {author} {\bibfnamefont {B.}~\bibnamefont
  {Zeng}}, \bibinfo {author} {\bibfnamefont {X.}~\bibnamefont {Chen}}, \ and\
  \bibinfo {author} {\bibfnamefont {I.~L.}\ \bibnamefont {Chuang}},\ }\href
  {https://doi.org/10.1103/PhysRevA.77.042313} {\bibfield  {journal} {\bibinfo
  {journal} {Phys. Rev. A}\ }\textbf {\bibinfo {volume} {77}},\ \bibinfo
  {pages} {042313} (\bibinfo {year} {2008})}\BibitemShut {NoStop}%
\bibitem [{\citenamefont {Cui}\ \emph {et~al.}(2017)\citenamefont {Cui},
  \citenamefont {Gottesman},\ and\ \citenamefont {Krishna}}]{Cui2017}%
  \BibitemOpen
  \bibfield  {author} {\bibinfo {author} {\bibfnamefont {S.~X.}\ \bibnamefont
  {Cui}}, \bibinfo {author} {\bibfnamefont {D.}~\bibnamefont {Gottesman}}, \
  and\ \bibinfo {author} {\bibfnamefont {A.}~\bibnamefont {Krishna}},\ }\href
  {\doibase 10.1103/PhysRevA.95.012329} {\bibfield  {journal} {\bibinfo
  {journal} {Phys. Rev. A}\ }\textbf {\bibinfo {volume} {95}},\ \bibinfo
  {pages} {012329} (\bibinfo {year} {2017})}\BibitemShut {NoStop}%
\bibitem [{\citenamefont {Gao}\ and\ \citenamefont {Duan}()}]{Gao2018}%
  \BibitemOpen
  \bibfield  {author} {\bibinfo {author} {\bibfnamefont {X.}~\bibnamefont
  {Gao}}\ and\ \bibinfo {author} {\bibfnamefont {L.}~\bibnamefont {Duan}},\
  }\href@noop {} {\ }\Eprint {http://arxiv.org/abs/1810.03176}
  {arXiv:1810.03176} \BibitemShut {NoStop}%
\bibitem [{\citenamefont {Bremner}\ \emph {et~al.}(2017)\citenamefont
  {Bremner}, \citenamefont {Montanaro},\ and\ \citenamefont
  {Shepherd}}]{Bremner2017quantum}%
  \BibitemOpen
  \bibfield  {author} {\bibinfo {author} {\bibfnamefont {M.~J.}\ \bibnamefont
  {Bremner}}, \bibinfo {author} {\bibfnamefont {A.}~\bibnamefont {Montanaro}},
  \ and\ \bibinfo {author} {\bibfnamefont {D.~J.}\ \bibnamefont {Shepherd}},\
  }\href {\doibase 10.22331/q-2017-04-25-8} {\bibfield  {journal} {\bibinfo
  {journal} {{Quantum}}\ }\textbf {\bibinfo {volume} {1}},\ \bibinfo {pages}
  {8} (\bibinfo {year} {2017})}\BibitemShut {NoStop}%
\bibitem [{\citenamefont {Kuperberg}(2015)}]{Kuperberg2015}%
  \BibitemOpen
  \bibfield  {author} {\bibinfo {author} {\bibfnamefont {G.}~\bibnamefont
  {Kuperberg}},\ }\href {\doibase 10.4086/toc.2015.v011a006} {\bibfield
  {journal} {\bibinfo  {journal} {Theory of Computing}\ }\textbf {\bibinfo
  {volume} {11}},\ \bibinfo {pages} {183} (\bibinfo {year} {2015})}\BibitemShut
  {NoStop}%
\bibitem [{\citenamefont {Fujii}\ and\ \citenamefont
  {Morimae}(2017)}]{FujiiNJP2017}%
  \BibitemOpen
  \bibfield  {author} {\bibinfo {author} {\bibfnamefont {K.}~\bibnamefont
  {Fujii}}\ and\ \bibinfo {author} {\bibfnamefont {T.}~\bibnamefont
  {Morimae}},\ }\href {\doibase 10.1088/1367-2630/aa5fdb} {\bibfield  {journal}
  {\bibinfo  {journal} {New Journal of Physics}\ }\textbf {\bibinfo {volume}
  {19}},\ \bibinfo {pages} {033003} (\bibinfo {year} {2017})}\BibitemShut
  {NoStop}%
\bibitem [{\citenamefont {Hangleiter}\ \emph {et~al.}(2018)\citenamefont
  {Hangleiter}, \citenamefont {Bermejo-Vega}, \citenamefont {Schwarz},\ and\
  \citenamefont {Eisert}}]{Hangleiter2018quant}%
  \BibitemOpen
  \bibfield  {author} {\bibinfo {author} {\bibfnamefont {D.}~\bibnamefont
  {Hangleiter}}, \bibinfo {author} {\bibfnamefont {J.}~\bibnamefont
  {Bermejo-Vega}}, \bibinfo {author} {\bibfnamefont {M.}~\bibnamefont
  {Schwarz}}, \ and\ \bibinfo {author} {\bibfnamefont {J.}~\bibnamefont
  {Eisert}},\ }\href {\doibase 10.22331/q-2018-05-22-65} {\bibfield  {journal}
  {\bibinfo  {journal} {{Quantum}}\ }\textbf {\bibinfo {volume} {2}},\ \bibinfo
  {pages} {65} (\bibinfo {year} {2018})}\BibitemShut {NoStop}%
\bibitem [{\citenamefont {Bravyi}\ \emph {et~al.}()\citenamefont {Bravyi},
  \citenamefont {Browne}, \citenamefont {Calpin}, \citenamefont {Campbell},
  \citenamefont {Gosset},\ and\ \citenamefont {Howard}}]{Bravyi2018}%
  \BibitemOpen
  \bibfield  {author} {\bibinfo {author} {\bibfnamefont {S.}~\bibnamefont
  {Bravyi}}, \bibinfo {author} {\bibfnamefont {D.}~\bibnamefont {Browne}},
  \bibinfo {author} {\bibfnamefont {P.}~\bibnamefont {Calpin}}, \bibinfo
  {author} {\bibfnamefont {E.}~\bibnamefont {Campbell}}, \bibinfo {author}
  {\bibfnamefont {D.}~\bibnamefont {Gosset}}, \ and\ \bibinfo {author}
  {\bibfnamefont {M.}~\bibnamefont {Howard}},\ }\href@noop {} {\ }\Eprint
  {http://arxiv.org/abs/1808.00128} {arXiv:1808.00128} \BibitemShut {NoStop}%
\bibitem [{\citenamefont {Bravyi}\ and\ \citenamefont
  {Gosset}(2016)}]{BravyiPRL2016}%
  \BibitemOpen
  \bibfield  {author} {\bibinfo {author} {\bibfnamefont {S.}~\bibnamefont
  {Bravyi}}\ and\ \bibinfo {author} {\bibfnamefont {D.}~\bibnamefont
  {Gosset}},\ }\href {\doibase 10.1103/PhysRevLett.116.250501} {\bibfield
  {journal} {\bibinfo  {journal} {Phys. Rev. Lett.}\ }\textbf {\bibinfo
  {volume} {116}},\ \bibinfo {pages} {250501} (\bibinfo {year}
  {2016})}\BibitemShut {NoStop}%
\bibitem [{\citenamefont {Howard}\ and\ \citenamefont
  {Campbell}(2017)}]{HowardPRL2017}%
  \BibitemOpen
  \bibfield  {author} {\bibinfo {author} {\bibfnamefont {M.}~\bibnamefont
  {Howard}}\ and\ \bibinfo {author} {\bibfnamefont {E.}~\bibnamefont
  {Campbell}},\ }\href {\doibase 10.1103/PhysRevLett.118.090501} {\bibfield
  {journal} {\bibinfo  {journal} {Phys. Rev. Lett.}\ }\textbf {\bibinfo
  {volume} {118}},\ \bibinfo {pages} {090501} (\bibinfo {year}
  {2017})}\BibitemShut {NoStop}%
\bibitem [{\citenamefont {Veitch}\ \emph {et~al.}(2014)\citenamefont {Veitch},
  \citenamefont {Mousavian}, \citenamefont {Gottesman},\ and\ \citenamefont
  {Emerson}}]{Veitch2014}%
  \BibitemOpen
  \bibfield  {author} {\bibinfo {author} {\bibfnamefont {V.}~\bibnamefont
  {Veitch}}, \bibinfo {author} {\bibfnamefont {S.~A.~H.}\ \bibnamefont
  {Mousavian}}, \bibinfo {author} {\bibfnamefont {D.}~\bibnamefont
  {Gottesman}}, \ and\ \bibinfo {author} {\bibfnamefont {J.}~\bibnamefont
  {Emerson}},\ }\href {\doibase 10.1088/1367-2630/16/1/013009} {\bibfield
  {journal} {\bibinfo  {journal} {New Journal of Physics}\ }\textbf {\bibinfo
  {volume} {16}},\ \bibinfo {pages} {013009} (\bibinfo {year}
  {2014})}\BibitemShut {NoStop}%
\bibitem [{\citenamefont {Bravyi}\ \emph {et~al.}(2016)\citenamefont {Bravyi},
  \citenamefont {Smith},\ and\ \citenamefont {Smolin}}]{BravyiPRX16}%
  \BibitemOpen
  \bibfield  {author} {\bibinfo {author} {\bibfnamefont {S.}~\bibnamefont
  {Bravyi}}, \bibinfo {author} {\bibfnamefont {G.}~\bibnamefont {Smith}}, \
  and\ \bibinfo {author} {\bibfnamefont {J.~A.}\ \bibnamefont {Smolin}},\
  }\href {\doibase 10.1103/PhysRevX.6.021043} {\bibfield  {journal} {\bibinfo
  {journal} {Phys. Rev. X}\ }\textbf {\bibinfo {volume} {6}},\ \bibinfo {pages}
  {021043} (\bibinfo {year} {2016})}\BibitemShut {NoStop}%
\bibitem [{\citenamefont {Gottesman}(1998)}]{Gottesman1998}%
  \BibitemOpen
  \bibfield  {author} {\bibinfo {author} {\bibfnamefont {D.}~\bibnamefont
  {Gottesman}},\ }\href {\doibase 10.1103/PhysRevA.57.127} {\bibfield
  {journal} {\bibinfo  {journal} {Phys. Rev. A}\ }\textbf {\bibinfo {volume}
  {57}},\ \bibinfo {pages} {127} (\bibinfo {year} {1998})}\BibitemShut
  {NoStop}%
\bibitem [{\citenamefont {Steane}(1997)}]{Steane1997}%
  \BibitemOpen
  \bibfield  {author} {\bibinfo {author} {\bibfnamefont {A.~M.}\ \bibnamefont
  {Steane}},\ }\href {\doibase 10.1103/PhysRevLett.78.2252} {\bibfield
  {journal} {\bibinfo  {journal} {Phys. Rev. Lett.}\ }\textbf {\bibinfo
  {volume} {78}},\ \bibinfo {pages} {2252} (\bibinfo {year}
  {1997})}\BibitemShut {NoStop}%
\bibitem [{\citenamefont {Dankert}\ \emph {et~al.}(2009)\citenamefont
  {Dankert}, \citenamefont {Cleve}, \citenamefont {Emerson},\ and\
  \citenamefont {Livine}}]{Dankert2009}%
  \BibitemOpen
  \bibfield  {author} {\bibinfo {author} {\bibfnamefont {C.}~\bibnamefont
  {Dankert}}, \bibinfo {author} {\bibfnamefont {R.}~\bibnamefont {Cleve}},
  \bibinfo {author} {\bibfnamefont {J.}~\bibnamefont {Emerson}}, \ and\
  \bibinfo {author} {\bibfnamefont {E.}~\bibnamefont {Livine}},\ }\href
  {\doibase 10.1103/PhysRevA.80.012304} {\bibfield  {journal} {\bibinfo
  {journal} {Phys. Rev. A}\ }\textbf {\bibinfo {volume} {80}},\ \bibinfo
  {pages} {012304} (\bibinfo {year} {2009})}\BibitemShut {NoStop}%
\bibitem [{\citenamefont {Tao}\ and\ \citenamefont
  {Vu}(2006)}]{tao2006additive}%
  \BibitemOpen
  \bibfield  {author} {\bibinfo {author} {\bibfnamefont {T.}~\bibnamefont
  {Tao}}\ and\ \bibinfo {author} {\bibfnamefont {V.~H.}\ \bibnamefont {Vu}},\
  }\href@noop {} {\emph {\bibinfo {title} {Additive combinatorics}}},\ Vol.\
  \bibinfo {volume} {105}\ (\bibinfo  {publisher} {Cambridge University
  Press},\ \bibinfo {year} {2006})\BibitemShut {NoStop}%
\bibitem [{\citenamefont {Bremner}\ \emph {et~al.}(2016)\citenamefont
  {Bremner}, \citenamefont {Montanaro},\ and\ \citenamefont
  {Shepherd}}]{BremnerPRL2016}%
  \BibitemOpen
  \bibfield  {author} {\bibinfo {author} {\bibfnamefont {M.~J.}\ \bibnamefont
  {Bremner}}, \bibinfo {author} {\bibfnamefont {A.}~\bibnamefont {Montanaro}},
  \ and\ \bibinfo {author} {\bibfnamefont {D.~J.}\ \bibnamefont {Shepherd}},\
  }\href {\doibase 10.1103/PhysRevLett.117.080501} {\bibfield  {journal}
  {\bibinfo  {journal} {Phys. Rev. Lett.}\ }\textbf {\bibinfo {volume} {117}},\
  \bibinfo {pages} {080501} (\bibinfo {year} {2016})}\BibitemShut {NoStop}%
\end{thebibliography}%

\appendix
\widetext

\section{Proof of Theorem 1}\label{sec:mix}

\subsection{Efficient evaluation of Fourier coefficients}
First, let us define the Fourier transformation on a single qubit state,  inspired by \cite{Gao2018}. Given a single qubit state $\rho\in D(\complex^2)$, we can write it in terms of its Bloch sphere representation 
\begin{eqnarray}\label{eq:bl1}
\rho=\frac{1}{2}\left( \rho_{00}I+\rho_{10}X+\rho_{01}Z+\rho_{11}XZ \right),
\end{eqnarray}
where $\rho_{00}=1$ and $|\rho_{10}|^2+|\rho_{01}|^2+|\rho_{11}|^2\leq1$.

Given $a, b\in\mathbb{F}_2$, it is easy to verify that 
\begin{eqnarray}\label{eq:inFT_2}
X^bZ^a\rho Z^aX^b
=\frac{1}{2}\sum_{s, t\in \mathbb{F}_2}
(-1)^{sa+tb}\rho_{st}X^sZ^t.
\end{eqnarray}
Thus, we can define the Fourier transformation on the state $\rho$ as follows
\begin{eqnarray}\label{eq:FT_2}
\mathbb{E}_{a\in \mathbb{F}_2, b\in \mathbb{F}_2}X^bZ^a\rho Z^aX^b(-1)^{sa+tb}
=\frac{1}{2}\rho_{st}X^sZ^t.
\end{eqnarray}
Note that for $t=s=0$, the above Fourier transformation is equal to the completely depolarizing channel.  And the equation \eqref{eq:inFT_2}
is the inverse Fourier  transformation of \eqref{eq:FT_2}.

Given the input states 
$\proj{0}^{\ot n}\ot^m_{i=1} \rho_i$ with Clifford unitary $U$, the output probability $q(\vec{y})$ is 
\begin{eqnarray}
q(\vec{y})=\bra{\vec{y}}U\proj{0}^{\ot n}\ot^m_{i=1} \rho_i U^\dag \ket{\vec{y}},
\end{eqnarray}
for any $\vec{y}\in\mathbb{F}^{n+m}_2$.
Let us denote the Pauli operators $Z^{\vec{a}}:=\ot^m_{i=1} Z^{a_i}, X^{\vec{b}}:=\ot^m_{i=1} X^{b_i}$  for any 
$\vec{a},\vec{b}\in \mathbb{F}^m_2$ to be operators acting on the latter $m$ qubits. Now, let us insert  $X^{\vec{b}}Z^{\vec{a}}$ into the $m$ mixed states  as follows
\begin{eqnarray}
q_{\vec{a}, \vec{b}}(\vec{y})
=\bra{\vec{y}}U\proj{0}^{\ot n}\ot(X^{\vec{b}} Z^{\vec{a}}\ot^m_{i=1} \rho_i Z^{\vec{a}}X^{\vec{b}})U^\dag \ket{\vec{y}}
=\bra{\vec{y}}U\proj{0}^{\ot n}\ot (\ot^m_{i=1} X^{b_i}Z^{a_i}\rho_i Z^{a_i}X^{b_i}) U^\dag \ket{\vec{y}}.
\end{eqnarray}
Hence, the output probability $q(\vec{y})=q_{\vec{0}, \vec{0}}(\vec{y})$.
Then, let us  take  the  Fourier transformation with respect to $\vec{a}, \vec{b}$ and the corresponding 
Fourier coefficient is
\begin{eqnarray*}
\hat{q}_{\vec{s},\vec{t}}&:=&\mathbb{E}_{\vec{s}\in\mathbb{F}^m_2, \vec{t}\in \mathbb{F}^m_2}
q_{\vec{a}, \vec{b}}(\vec{y})(-1)^{\vec{s}\cdot\vec{a}+\vec{t}\cdot\vec{b}}\\
&=&\mathbb{E}_{\vec{s}\in\mathbb{F}^m_2, \vec{t}\in \mathbb{F}^m_2}\bra{\vec{y}}U\proj{0}^{n}\ot (\ot^m_{i=1} X^{b_i}Z^{a_i}\rho_i Z^{a_i}X^{b_i}) U^\dag \ket{\vec{y}}\\
&=&\bra{\vec{y}}U\proj{0}^{\ot n}\ot (\ot^m_{i=1} \mathbb{E}_{a_i\in \mathbb{F}_2, b_i\in\mathbb{F}_2}X^{b_i}Z^{a_i}\rho_i Z^{a_i}X^{b_i}) U^\dag \ket{\vec{y}}.
\end{eqnarray*}
By equation \eqref{eq:FT_2}, we have 
\begin{eqnarray}
\hat{q}_{\vec{s},\vec{t}}
=\bra{\vec{y}}U\proj{0}^{\ot n}\ot^m
_{i=1}X^{s_i}Z^{t_i} U^\dag \ket{\vec{y}}\cdot \prod^m_{i=1}\left(\frac{\rho^{(i)}_{s_it_i}}{2}\right),
\end{eqnarray}
where $\rho^{(i)}_{s_it_i}$ is the coefficient of $\rho_i$ in the corresponding Bloch sphere representation.
Since $U$ is a Clifford unitary, then 
\begin{eqnarray*}
U\proj{0}^{\ot n}\otimes^m_{i=1}X^{s_i}Z^{t_i} U^\dag
=\prod^{n}_{i=1}\left(\frac{I+P_i}{2}\right)\prod^{m}_{j=1}Q_j,
\end{eqnarray*}
where the Pauli operators $P_i:=UZ_iU^\dag$ for $1\leq i\leq n$ and $P_j:=UX^{s_j}Z^{t_j} U^\dag $ for $1\leq j\leq m$  and they are commuting with each other.  Thus, by Gottesman-Knill Theorem,  the Fourier coefficients 
$\hat{q}_{\vec{s},\vec{t}}$  
can be evaluated in classical $O((n+m)^3)$ time .

\subsection{Exponential decay of Fourier coefficients}

Since $\rho $ is a mixed state in $D(\complex^2)$, it can always be written as $\rho=(1-\lambda)\sigma+\frac{\lambda}{2} I$, where
$\sigma$ is a pure state and $\lambda=1-\sqrt{2\Tr{\rho^2}-1}$. The pure state $\sigma$ also has the Bloch sphere representation 
\begin{eqnarray}\label{eq:bl2}
\sigma=\frac{1}{2}\left( \sigma_{00}I+\sigma_{10}X+\sigma_{01}Z+\sigma_{11}XZ \right),
\end{eqnarray}
where $\sigma_{00}=1$ and $|\sigma_{10}|^2+|\sigma_{01}|^2+|\sigma_{11}|^2=1$.
We have the following relationship between the coefficients $\rho_{st}$ and $\sigma_{st}$ for any 
$s,t\in\mathbb{F}_2$.

\begin{lem}\label{lem:rel1}
Given a mixed state $\rho=(1-\lambda)\sigma+\frac{\lambda}{2} I$, where 
$\rho, \sigma$ has Bloch sphere representation given by \eqref{eq:bl1} and \eqref{eq:bl2} respectively, 
then 
we have
\begin{eqnarray}
\rho_{st}=(1-\lambda)^{w(s,t)}\sigma_{st},
\end{eqnarray}
for any $s,t\in \mathbb{F}_2$, where $w(s, t)$ is defined as
\begin{equation}\label{eq:w1}
w(s, t)=
             \begin{cases}
          0, & s=0, t=0    \\
             1, &  \text{otherwise}
             \end{cases}.
\end{equation}
\end{lem}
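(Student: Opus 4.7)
The plan is to prove this lemma by a direct coefficient comparison in the Pauli basis. I will substitute the Bloch sphere expansion \eqref{eq:bl2} of the pure state $\sigma$ into the convex decomposition $\rho=(1-\lambda)\sigma+\frac{\lambda}{2}I$, collect like terms, and then match the result against the Bloch sphere expansion \eqref{eq:bl1} of $\rho$.

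Explicitly, I would write
\begin{eqnarray*}
\rho=(1-\lambda)\cdot\frac{1}{2}\sum_{s,t\in\mathbb{F}_2}\sigma_{st}X^sZ^t+\frac{\lambda}{2}I
=\frac{1}{2}\left[(1-\lambda)\sigma_{00}+\lambda\right]I+\frac{1}{2}\sum_{(s,t)\neq(0,0)}(1-\lambda)\sigma_{st}X^sZ^t,
\end{eqnarray*}
where in the second equality I separate the $(s,t)=(0,0)$ term (which combines with the depolarizing admixture $\frac{\lambda}{2}I$) from the remaining three terms. Since the operators $\{I,X,Z,XZ\}$ are linearly independent in $M_2(\complex)$ — they are just the Pauli basis up to the phase $XZ=-iY$ — I may equate coefficients against \eqref{eq:bl1}. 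Using $\sigma_{00}=\rho_{00}=1$, which both follow from the unit-trace condition, the $(0,0)$ comparison gives $\rho_{00}=1=(1-\lambda)^{0}\sigma_{00}$, and the remaining three comparisons give $\rho_{st}=(1-\lambda)\sigma_{st}$ for $(s,t)\neq(0,0)$. Matching the exponent $0$ versus $1$ to the definition \eqref{eq:w1} of $w(s,t)$ then yields the claimed identity $\rho_{st}=(1-\lambda)^{w(s,t)}\sigma_{st}$ for all $s,t\in\mathbb{F}_2$.

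There is no substantial obstacle here: the lemma is essentially the observation that a depolarizing admixture rescales every traceless Pauli component of the Bloch vector by $(1-\lambda)$ while preserving the identity component. The only bookkeeping care one needs is to note that the authors' Bloch convention uses the operators $X^sZ^t$ rather than the Hermitian Paulis $\{I,X,Y,Z\}$; this replaces $Y$ by $-iY$ in one basis element but does not affect linear independence, so the coefficient comparison goes through unchanged.
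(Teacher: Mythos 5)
Your proof is correct and is essentially the same direct computation as the paper's: the paper extracts the coefficients via the trace formula $\rho_{st}=\Tr{X^s\rho Z^t}=(1-\lambda)\sigma_{st}+\lambda\delta_{s,0}\delta_{t,0}$, while you equate coefficients using linear independence of $\{I,X,Z,XZ\}$, which is the same observation in a different guise. No gaps.
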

\begin{proof}

This is because 

\begin{eqnarray*}
\rho_{st}=
\Tr{X^s\rho Z^t}
=(1-\lambda)\Tr{X^s\sigma Z^t}+\lambda/2
\Tr{X^sZ^t}
=(1-\lambda)\sigma_{st}+\lambda\delta_{s,0}\delta_{t,0}
=(1-\lambda)^{w(s,t)}\sigma_{st},
\end{eqnarray*}
where  $w(s,t)$ is defined as \eqref{eq:w1}.

\end{proof}

Each mixed input state $\rho_i$ can be written as $\rho_i=(1-\lambda_i)\sigma_i+\frac{\lambda_i}{2}I$ where
$\sigma_i$ is a pure state. 
Consider the quantum circuit with   input state
$\proj{0}^{\ot n}\ot^m_{i=1} \sigma_i$ and  Clifford unitary $U$, the output probability $p(\vec{y})$ is equal to 
\begin{eqnarray}
p(\vec{y})=\bra{\vec{y}}U\proj{0}^{\ot n}\ot^m_{i=1} \sigma_i U^\dag \ket{\vec{y}}.
\end{eqnarray}
Similar to $q(\vec{y})$, we insert $X^{\vec{b}}Z^{\vec{a}}$ into the circuit and define $p_{\vec{a},\vec{b}}$ as follows
\begin{eqnarray}
p_{\vec{a}, \vec{b}}(\vec{y})
=\bra{\vec{y}}U\proj{0}^{\ot n}\ot (\ot^m_{i=1} X^{b_i}Z^{a_i}\sigma_i Z^{a_i}X^{b_i}) U^\dag \ket{\vec{y}}.
\end{eqnarray}
Then  the corresponding Fourier coefficient can also be expressed as follows,
\begin{eqnarray}
\hat{p}_{\vec{s},\vec{t}}
=\bra{\vec{y}}U\proj{0}^{\ot n}\ot^m_{i=1}X^{s_i}Z^{t_i} U^\dag \ket{\vec{y}}\cdot \prod^m_{i=1}\left(\frac{\sigma^{(i)}_{s_it_i}}{2}\right),
\end{eqnarray}
where $\sigma^{(i)}_{s_it_i}$ is the coefficient of $\sigma_i$ in the corresponding Bloch sphere representation. 
By Lemma \ref{lem:rel1}, it is easy to see that 
\begin{eqnarray}
|\hat{q}_{\vec{s},\vec{t}}|\leq (1-\lambda)^{w(\vec{s}, \vec{t})}|\hat{p}_{\vec{s},\vec{t}}|,
\end{eqnarray}
where $\lambda=\min_i\lambda_i$ and 
$w(\vec{s}, \vec{t})$ is defined as 
\begin{eqnarray}\label{eq:w}
w(\vec{s}, \vec{t}):=\sum_{i} w(s_i, t_i).
\end{eqnarray}

\subsection{Good approximation with respect to $l_1$ norm}
The following lemma regarding Clifford unitaries on $n$ qubits is necessary the proof,
\begin{lem}[\cite{Dankert2009}]\label{lem:2-des}
The uniform distribution of Clifford unitaries on $n$ qubits is an exact 2-design, that is, for any 
$A, B, W$, we have 
\begin{eqnarray}
\mathbb{E}_{U\sim \mathcal{C}l_n}
U^{\dag} AU W U^{\dag} BU
=\int_{U(2^n)}\mathrm{d}U \ U^{\dag} AUWU^{\dag} BU,
\end{eqnarray}
where $\mathbb{E}_{U\sim \mathcal{C}l_n}:=\frac{1}{|\mathcal{C}l_n|}\sum_{U\sim \mathcal{C}l_n}$ and
\begin{eqnarray}
\int_{U(2^n)}\mathrm{d}U \ U^{\dag} AUWU^{\dag} BU
=\frac{\Tr{AB}\Tr{W}}{2^n}\frac{I}{2^n}
+\frac{2^n\Tr{A}\Tr{B}-\Tr{AB}}{2^n(2^{2n}-1)}\left(W-\Tr{W}\frac{I}{2^n}\right).
\end{eqnarray}
\end{lem}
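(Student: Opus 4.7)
The statement has two parts: (i) the equality of the Clifford average with the Haar integral, and (ii) the closed form of that integral. I would treat them in that order.

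For part (i), the key observation is that the integrand $U^\dag A U\,W\,U^\dag B U$ is a polynomial of bidegree $(2,2)$ in the entries of $U$ and $U^{*}$. Vectorising with $\mathrm{vec}(XWY)=(Y^{T}\ot X)\,\mathrm{vec}(W)$ makes this explicit:
\begin{equation}
\mathrm{vec}\!\bigl(U^\dag A U\,W\,U^\dag B U\bigr)
=\bigl((U^\dag A U)\ot(U^{T}B^{T}U^{*})\bigr)\,\mathrm{vec}(W),
\end{equation}
so averaging the left-hand side reduces to averaging a $(2,2)$-polynomial in $U,U^{*}$, which is by definition preserved by an exact unitary $2$-design. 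The claim then follows from the classical fact that the $n$-qubit Clifford group is an exact unitary $2$-design; I would invoke this directly from \cite{Dankert2009}. Re-deriving it in situ uses Schur's lemma: the commutant of $\{U\ot U^{*}:U\in\mathcal{C}l_n\}$ coincides with the full unitary commutant, which is two-dimensional and spanned by $\mathbb{I}\ot\mathbb{I}$ and the maximally entangled state projector. For Cliffords this is visible from the Pauli-basis expansion and the symplectic transitivity of the Clifford action on the non-identity elements of $P^{n}$.

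For part (ii), I would apply Schur's lemma directly to the linear map $T(W):=\int_{U(2^n)}U^\dag A U\,W\,U^\dag B U\,\mathrm{d}U$. Because $T$ is covariant under the adjoint action of $U(2^n)$ on $W$, it must take the form $T(W)=\alpha\,\Tr{W}\,\mathbb{I}+\beta\, W$ with $\alpha,\beta$ depending only on $A$ and $B$. Two independent linear equations pin these coefficients down. The trace of $T(W)$, which by cyclicity collapses to $\Tr{AB}\cdot\Tr{W}/d$ with $d=2^n$, supplies one. An independent partial-trace/swap identity supplies the other: using $\Ptr{2}{(X\ot Y)(W\ot\mathbb{I})\,\mathrm{SWAP}}=XWY$ converts $T(W)$ into a partial trace of the standard two-fold Haar twirl $\int U^\dag A U\ot U^\dag B U\,\mathrm{d}U$, whose closed form is itself a textbook consequence of the same Schur argument. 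Solving the resulting $2\times 2$ system and regrouping terms reproduces the stated expression with prefactors $1/d$ and $1/(d^{2}-1)$.

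The main obstacle is unambiguously the $2$-design step: showing that a finite subgroup can exactly reproduce the Haar integral of a $(2,2)$-polynomial is the nontrivial content of the lemma, and is precisely what \cite{Dankert2009} provides. The subsequent Weingarten-style computation in (ii) is then routine linear algebra.
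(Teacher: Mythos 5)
Your proposal is correct and matches what the paper itself does: the paper gives no proof of this lemma, citing Dankert et al.\ for the exact 2-design property of the Clifford group, and your argument likewise defers that nontrivial step to the same reference while supplying the standard Schur/Weingarten computation for the Haar twirl, which indeed reproduces the stated closed form (note only the harmless slip that with the column-stacking convention $\mathrm{vec}(XWY)=(Y^{T}\ot X)\mathrm{vec}(W)$ the two tensor factors in your vectorized expression should appear in the opposite order). No gaps beyond that cosmetic point.
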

Now, let us prove Theorem 1. Let us define
\begin{equation}
\hat{q}'_{\vec{s},\vec{t}}(\vec{y})=
             \begin{cases}
             \hat{q}_{\vec{s}, \vec{t}}(\vec{y}), & w(\vec{s},\vec{t})\leq l   \\
             0, &  \text{otherwise}
             \end{cases},
\end{equation}
which gives an family of unnormalized probability distribution $\set{q'_{\vec{a},\vec{b}}}$as
$q'_{\vec{a},\vec{b}}(\vec{y})=\sum_{\vec{s}, \vec{t} }\hat{q}'_{\vec{s},\vec{t}}(\vec{y})(-1)^{\vec{s}\cdot \vec{a}+\vec{t}\cdot\vec{b}}$ for each output $\vec{y}\in\mathbb{F}^{n+m}_2$
Then we show that $q'_{\vec{0}, \vec{0}}(\vec{y})$ gives a good approximation of $q_{\vec{0}, \vec{0}}(\vec{y})$ with respect to $l_1$ norm
\begin{eqnarray*}
\norm{q'_{\vec{0},\vec{0}}- q_{\vec{0},\vec{0}}}_1
=\sum_{\vec{y}\in \mathbb{F}^{n+m}_2}
|q'_{\vec{0}, \vec{0}}(\vec{y})-q_{\vec{0}, \vec{0}}(\vec{y})|
\end{eqnarray*}
 for a large fraction of 
Clifford circuits. 
First, since    $ \hat{q}_{\vec{s}, \vec{t}}(\vec{y})$  depends on the Clifford unitaries $U$, denote it as $\hat{q}_{\vec{s}, \vec{t}}(\vec{y})[U]$,  then it is easy to show that 
\begin{eqnarray}\label{eq:cli_eq}
 \hat{q}_{\vec{s}, \vec{t}}(\vec{y})[U](-1)^{\vec{a}\cdot \vec{s}+\vec{b}\cdot \vec{t}}
= \hat{q}_{\vec{s}, \vec{t}}(\vec{y})[U'],
\end{eqnarray}
where $U'=U\circ Z^{\vec{a}}X^{\vec{b}}$ is also a Clifford unitary for any $\vec{a},\vec{b}\in \mathbb{F}^m_2$ and $Z^{\vec{a}}X^{\vec{b}} $ act on
the $n+1,\ldots, n+m$th qubits. 
Thus 

\begin{eqnarray}\label{eq:ave_u}
\mathbb{E}_{U\sim \mathcal{C}l_{n+m}}\norm{q'_{\vec{0},\vec{0}}- q_{\vec{0},\vec{0}}}^2_1
=\mathbb{E}_{U\sim \mathcal{C}l_{n+m}}\norm{q'_{\vec{a}, \vec{b}}- q_{\vec{a}, \vec{b}}}^2_1
=\mathbb{E}_{U\sim \mathcal{C}l_{n+m}}\mathbb{E}_{\vec{a}\in\mathbb{F}^m_2, \vec{b}\in \mathbb{F}^m_2}\norm{q'_{\vec{a},\vec{b}}- q_{\vec{a},\vec{b}}}^2_1.
\end{eqnarray}
Moreover, 
\begin{eqnarray}
\nonumber\mathbb{E}_{\vec{a}\in\mathbb{F}^m_2, \vec{b}\in \mathbb{F}^m_2}\norm{q'_{\vec{a},\vec{b}}- q_{\vec{a},\vec{b}}}^2_1
\nonumber&\leq&  \mathbb{E}_{\vec{a}\in\mathbb{F}^m_2, \vec{b}\in \mathbb{F}^m_2} 2^{n+m}\sum_{\vec{y}\in\mathbb{F}^{n+m}_2}(q'_{\vec{a},\vec{b}}(\vec{y})- q_{\vec{a},\vec{b}}(\vec{y}))^2\\
\nonumber&=&2^{n+m}\sum_{\vec{y}\in\mathbb{F}^{n+m}_2}\mathbb{E}_{\vec{a}\in\mathbb{F}^m_2, \vec{b}\in \mathbb{F}^m_2} (q'_{\vec{a},\vec{b}}(\vec{y})- q_{\vec{a},\vec{b}}(\vec{y}))^2\\
\nonumber&=&2^{n+m}\sum_{\vec{y}\in\mathbb{F}^{n+m}_2}\sum_{\vec{s}\in\mathbb{F}^m_2,\vec{t}\in\mathbb{F}^m_2} (\hat{q}'_{\vec{s},\vec{t}}(\vec{y})- \hat{q}_{\vec{s},\vec{t}}(\vec{y}))^2\\
\nonumber&\leq&2^{n+m}(1-\lambda)^{2l}\sum_{\vec{y}\in\mathbb{F}^{n+m}_2}\sum_{w(\vec{s}, \vec{t})\geq l} \hat{p}^2_{\vec{s},\vec{t}}(\vec{y})\\
\nonumber&\leq &2^{n+m}(1-\lambda)^{2l}\sum_{\vec{y}\in\mathbb{F}^{n+m}_2}\sum_{\vec{s}\in\mathbb{F}^m_2,\vec{t}\in\mathbb{F}^m_2} \hat{p}^2_{\vec{s}, \vec{t}}(\vec{y})\\
\label{eq:For_a}&=&2^{n+m}(1-\lambda)^{2l}\sum_{\vec{y}\in\mathbb{F}^{n+m}_2}\mathbb{E}_{\vec{a}\in\mathbb{F}^m_2, \vec{b}\in \mathbb{F}^m_2} p^2_{\vec{a}, \vec{b}}(\vec{y}),
\end{eqnarray}
where the first line comes from the Cauchy-Schwarz inequality, the third line comes from the Parseval identity, and the fourth 
line comes from the fact that 
$|\hat{q}_{\vec{s},\vec{t}}(\vec{y})|\leq (1-\lambda)^{w(\vec{s}, \vec{t})}|\hat{p}_{\vec{s},\vec{t}}(\vec{y})|$.
According to Lemma \ref{lem:2-des}, we have 
\begin{eqnarray*}
\mathbb{E}_{U\sim \mathcal{C}l_{n+m}}p^2_{\vec{a}, \vec{b}}(\vec{y})
\leq 2\cdot2^{-2(n+m)}.
\end{eqnarray*}
Thus 
\begin{eqnarray*}
\mathbb{E}_{U\sim \mathcal{C}l_{n+m}}\norm{q'_{\vec{0},\vec{0}}- q_{\vec{0},\vec{0}}}^2_1
\leq 2e^{-\lambda l}.
\end{eqnarray*}
By Markov's inequality, we have
\begin{eqnarray*}
\mathrm{Pr}_{U\sim  \mathcal{C}l_{n+m}}\left[\norm{q'_{\vec{0},\vec{0}}- q_{\vec{0},\vec{0}}}_1\leq \sqrt{\alpha}e^{-\lambda l}\right]\geq 1-\frac{2}{\alpha}.
\end{eqnarray*}
Therefore, to obtain the $l_1$ norm up to $\delta$, we need take
$l=O(\log(\sqrt{\alpha}/\delta)/\lambda)$ and evaluate the Fourier coefficients $\hat{q}'_{\vec{s}, \vec{t}}(\vec{y})$ with $w(\vec{s}, \vec{t})\leq l$,
where total amount of such Fourier coefficients  is $\sum_{i\leq l}3^i C^i_m\leq 3^lm^l$. Thus, 
 there exists a classical algorithm 
to approximate each output probability $q(\vec{y})$ in time $O((n+m)^3)m^l=(n+m)^{O(1)}m^{O(\log(\sqrt{\alpha}/\delta)/\lambda)}$ with $l_1$ norm less than $\delta$ for at least $1-\frac{2}{\alpha}$  fraction of
Clifford circuits.
Thus, we finish the proof of Theorem 1.

\subsection{Slightly beyond Clifford circuits }\label{subsec:slig_cl}

Now, let us consider the quantum circuit $\mathcal{C}=\mathcal{C}_1\circ V$ with input state 
$\proj{0}^{\ot n}\ot^m_{i=1} \rho_i$  and  the gates in circuits $\mathcal{C}_1$ taken from the 
set of Clifford gates on $n$ qubits $\mathcal{C}l_{n+m}$ and $V$ is taken from the third level of Clifford hierarchy
 $\mathcal{C}l^{(3)}_{m}$ acting on $n+1,\ldots,(n+m)$th qubits. 
 The proof of Corollary 2 is almost the same as that of Theorem 1. 
 We only need to show the corresponding Fourier coefficients of  $q_{\vec{a}, \vec{b}}$ 
 also can be evaluated in $O((n+m)^3)$ time, where
 \begin{eqnarray}
q_{\vec{a}, \vec{b}}(\vec{y})
=\bra{\vec{y}}UV\proj{0}^{\ot n}\ot (X^{\vec{b}} Z^{\vec{a}} \ot^m_{i=1}\rho_i Z^{\vec{a}}X^{\vec{b}})V^\dag U^\dag \ket{\vec{y}}
=\bra{\vec{y}}UV\proj{0}^{\ot n}\ot (\ot^m_{i=1} X^{b_i}Z^{a_i}\rho_i Z^{a_i}X^{b_i}) U^\dag V^{\dag} \ket{\vec{y}}.
\end{eqnarray}
and $V\in \mathcal{C}l^{(3)}_m, U\in \mathcal{C}l_{n+m}$. Then the Fourier coefficient 
$\hat{q}_{\vec{s}, \vec{t}}(\vec{y})$ is equal to 
\begin{eqnarray}
\hat{q}_{\vec{s},\vec{t}}
=\bra{\vec{y}}U V\proj{0}^{\ot n}\ot^m_{i=1}X^{s_i}Z^{t_i}  V^\dag U^\dag \ket{\vec{y}}\cdot \prod^m_{i=1}\left(\frac{\rho^{(i)}_{s_it_i}}{2}\right).
\end{eqnarray}

Since $V\in \mathcal{C}l^{(3)}_m$, then $V\ot^m_{i=1}X^{s_i}Z^{t_i}  V^\dag \in \mathcal{C}l_m$. Thus, 
\begin{eqnarray*}
\hat{q}_{\vec{s},\vec{t}}
=\bra{\vec{y}}U \proj{0}^{\ot n}U' \ket{\vec{y}}\cdot \prod^m_{i=1}\left(\frac{\rho^{(i)}_{s_it_i}}{2}\right).
\end{eqnarray*}
where $U, U'=V\ot^m_{i=1}X^{s_i}Z^{t_i}  V^\dag U^\dag$ are both Clifford unitaries. Thus, the Fourier coefficient
$\hat{q}_{\vec{s},\vec{t}}$ can also be evaluated in $O((n+m)^3)$ time by Gottesman-Knill Theorem. 
Therefore, it is easy to prove Corollary 2 by following the proof of Theorem 1.


\section{Efficient classical simualtion of  IQP circuits with noisy input states}\label{sec:IQP}
In this section, we will prove the following proposition in Example 2:
\begin{prop}
Given an IQP circuit $H^{\ot n}DH^{\ot n}$ with the diagonal unitaries chosen from the 
gate set $\set{CZ, Z, S, T}$, if there is depolarizing nosie acting on each input state, i.e.,  input state
is $((1-\epsilon)\proj{0}+\frac{\epsilon}{2}I)^{\ot n}$,  then there exists an efficient  classical algorithm 
to approximate the output probabilities up to $l_1$ norm $\delta$ in time 
$n^{O(\log(\sqrt{\alpha}/\delta)/\epsilon)}$ for at least $1-\frac{2}{\alpha}$ fraction of IQP circuits. 
\end{prop}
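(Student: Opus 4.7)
The plan is to adapt the proof of Theorem 1 to IQP circuits, replacing the Clifford $2$-design input (Lemma \ref{lem:2-des}) by a second-moment bound for random IQP circuits obtained from the output-distribution analysis in Appendix \ref{sec:dis_IQP}. I first write each input qubit in its Bloch form, $\rho = (1-\epsilon)\proj{0} + (\epsilon/2)I = \frac{1}{2}[I + (1-\epsilon)Z]$, so by Lemma \ref{lem:rel1} its only nonvanishing Bloch coefficients are $\rho_{00}=1$ and $\rho_{01}=1-\epsilon$. Following the Fourier construction in Appendix \ref{sec:mix}, I insert $X^{\vec{b}} Z^{\vec{a}}$ at each of the $n$ input qubits of the IQP circuit $C = H^{\otimes n} D H^{\otimes n}$, obtain $q_{\vec{a}, \vec{b}}(\vec{y})$, and take its Fourier transform. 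All coefficients $\hat q_{\vec{s}, \vec{t}}(\vec{y})$ with $\vec{s} \neq \vec{0}$ then vanish, while
\begin{equation*}
\hat q_{\vec{0}, \vec{t}}(\vec{y}) = \frac{(1-\epsilon)^{|\vec{t}|}}{2^n}\bra{\vec{y}} C Z^{\vec{t}} C^\dag \ket{\vec{y}}.
\end{equation*}

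I next verify that each surviving coefficient is efficiently computable despite $C$ being non-Clifford. Since all gates in $D$ commute and $T^2 = S$, I decompose $D = D_C U_T$, where $D_C$ is a diagonal Clifford built from $\{CZ,Z,S\}$ and $U_T = \bigotimes_i T_i^{a_i}$ with $a_i \in \{0,1\}$. Each single-qubit factor lies in $\mathcal{C}l^{(3)}_1$, so $U_T \in \mathcal{C}l^{(3)}_n$. Using $HZH = X$,
\begin{equation*}
C Z^{\vec{t}} C^\dag = H^{\otimes n} D_C \bigl(U_T X^{\vec{t}} U_T^\dag\bigr) D_C^\dag H^{\otimes n},
\end{equation*}
and by the defining property of $\mathcal{C}l^{(3)}_n$ the inner factor $U_T X^{\vec{t}} U_T^\dag$ is a Clifford unitary. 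Hence the whole expression is Clifford, and $\hat q_{\vec{0}, \vec{t}}(\vec{y})$ reduces to a Clifford amplitude computable in $O(n^3)$ time by the Gottesman--Knill theorem.

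For the averaging step, I truncate to $|\vec{t}| \leq l$, let $q'_{\vec{0},\vec{0}}$ denote the resulting partial Fourier sum, and bound $\|q'_{\vec{0},\vec{0}} - q_{\vec{0},\vec{0}}\|_1^2$ via Cauchy--Schwarz and Parseval exactly as in the derivation of \eqref{eq:For_a}, arriving at
\begin{equation*}
\mathbb{E}_{\mathrm{IQP}} \|q'_{\vec{0},\vec{0}} - q_{\vec{0}, \vec{0}}\|_1^2 \;\leq\; 2^n (1-\epsilon)^{2l} \sum_{\vec{y}} \mathbb{E}_{\mathrm{IQP}}\mathbb{E}_{\vec{a}, \vec{b}} \, p^2_{\vec{a}, \vec{b}}(\vec{y}).
\end{equation*}
Substituting the second-moment bound $\mathbb{E}_{\mathrm{IQP}}\, p^2_{\vec{a}, \vec{b}}(\vec{y}) = O(2^{-2n})$ proved in Appendix \ref{sec:dis_IQP} yields $O(e^{-2\epsilon l})$ on the right-hand side. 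Markov's inequality then furnishes the $1 - 2/\alpha$ fraction statement, and setting $l = O(\log(\sqrt{\alpha}/\delta)/\epsilon)$ produces $l_1$-error at most $\delta$. Since there are only $\sum_{i \leq l}\binom{n}{i} \leq n^l$ nonzero Fourier coefficients to evaluate, the total runtime is $n^{O(\log(\sqrt{\alpha}/\delta)/\epsilon)}$.

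The main obstacle is the second-moment estimate for random IQP circuits: unlike the Clifford group, IQP circuits generated by $\{CZ,Z,S,T\}$ do not form a unitary $2$-design, so Lemma \ref{lem:2-des} is not directly available. Establishing $\mathbb{E}_{\mathrm{IQP}}\, p^2_{\vec{a}, \vec{b}}(\vec{y}) = O(2^{-2n})$ is the core technical content of Appendix \ref{sec:dis_IQP}; once it is in hand, the remainder of the argument above is a mechanical specialization of the proof of Theorem 1.
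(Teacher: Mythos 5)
Your proposal is correct and follows essentially the same route as the paper's own proof in Appendix \ref{sec:IQP}: Fourier-decompose the noisy diagonal input, show the surviving coefficients reduce to Clifford amplitudes via $T\in\mathcal{C}l^{(3)}$ (the paper does this explicitly through $T^{\gamma}X^{s}T^{-\gamma}=e^{-i\pi\gamma s/4}S^{\gamma s}X^{s}$), truncate at weight $l$, apply Cauchy--Schwarz and Parseval, invoke the random-IQP second-moment bound of Lemma \ref{lem:av_dis}, and finish with Markov's inequality. The only differences are cosmetic --- you use the full $X^{\vec b}Z^{\vec a}$ Fourier transform and note the $X$-indexed coefficients vanish where the paper works with the reduced transform \eqref{eq:FT_d}, you quote the second moment pointwise in $\vec y$ where the lemma gives the (sufficient) sum over $\vec y$, and you leave implicit the translation-invariance step $\hat q_{\vec t}[D](-1)^{\vec b\cdot\vec t}=\hat q_{\vec t}[D\circ Z^{\vec b}]$ that the paper states explicitly to justify averaging over $\vec b$ within the IQP family.
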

\begin{proof}

The proof is similar to that of Theorem 1. 
If the state $\rho$ has some specific form as $\rho=\frac{1}{2}(\rho_{0}I+\rho_{1}Z)$, then we can simplify the  Fourier transformation  \eqref{eq:FT_2}
as 
\begin{eqnarray}\label{eq:FT_d}
\mathbb{E}_{a\in\mathbb{F}_2}X^{a}\rho X^a(-1)^{as}
=\frac{1}{2}\rho_sZ^s.
\end{eqnarray}

Given an IQP circuit $H^{\ot n} D H^{\ot}$ with noisy input states 
$\rho^{\ot n}$, $\rho=(1-\epsilon)\proj{0}+\epsilon\frac{I}{2}$, and gates in $D$ chosen from the gate set $\set{CZ, Z, S, T}$, then the output probability $q(\vec{y})$ is equal to 
\begin{eqnarray}
q(\vec{y})=\bra{\vec{y}}H^{\ot n} D H^{\ot n} \rho^{\ot n} H^{\ot n } D H^{\ot n}  \ket{\vec{y}}.
\end{eqnarray}
Similar to the proof of Theorem 1, we  insert $X^{\vec{a}}$  into the circuits for any $\vec{a}\in\mathbb{F}^n_2$ and define 
$q_{\vec{a}}(\vec{y})$ as follows
\begin{eqnarray}
q_{\vec{a}}(\vec{y})=\bra{\vec{y}}H^{\ot n} D H^{\ot n}X^{\vec{a}}\rho^{\ot n} X^{\vec{a}} H^{\ot n } D H^{\ot n}  \ket{\vec{y}}
=\bra{\vec{y}}H^{\ot n} D H^{\ot n}\ot_i X^{a_i}\rho X^{a_i} H^{\ot n } D H^{\ot n}  \ket{\vec{y}}
\end{eqnarray}
Then let us take the Fourier transformation with respect to $\vec{a}$ and the corresponding 
Fourier coefficient is 
\begin{eqnarray}
\nonumber\hat{q}_{\vec{s}}(\vec{y})
&:=&\mathbb{E}_{\vec{a}\in\mathbb{F}^n_2}
q_{\vec{a}}(\vec{y})(-1)^{\vec{s}\cdot \vec{a}}\\
\nonumber&=&\mathbb{E}_{\vec{a}\in\mathbb{F}^n_2}\bra{\vec{y}}H^{\ot n} D H^{\ot n}\ot_i X^{a_i}\rho X^{a_i} H^{\ot n } D H^{\ot n}  \ket{\vec{y}}(-1)^{\vec{s}\cdot \vec{a}}\\
\nonumber&=&\bra{\vec{y}}H^{\ot n} D H^{\ot n}\ot_i (\mathbb{E}_{a_i\in \mathbb{F}_2} X^{a_i}\rho X^{a_i} (-1)^{a_is_i})H^{\ot n } D H^{\ot n}  \ket{\vec{y}}\\
\nonumber&=&\bra{\vec{y}}H^{\ot n} D H^{\ot n}\ot_i Z^{s_i} H^{\ot n } D H^{\ot n}  \ket{\vec{y}}\prod^n_{i=1}\left(\frac{\rho_{s_i}}{2}\right)\\
\label{eq:IQP_m}&=&\bra{\vec{y}}H^{\ot n} D H^{\ot n}\ot_i Z^{s_i} H^{\ot n } D H^{\ot n}  \ket{\vec{y}}\prod^n_{i=1}\left(\frac{(1-\epsilon)^{s_i}}{2}\right),
\end{eqnarray}
where the second last equality comes from \eqref{eq:FT_d}.

Besides, 
\begin{eqnarray}
DH^{\ot n}\ot_i Z^{s_i} H^{\ot n }D^\dag
=D\ot_i X^{s_i} D^\dag
=D' \ot_iT^{\gamma_i} Z^{s_i} X^{-\gamma_i} D'^{\dag}
\end{eqnarray}
where the diagonal part $D$ can be written as $D'\circ \ot^n_{i=1}T^{\gamma_i} $ with $\gamma_i\in \mathbb{F}_2$ and 
the gates in $D'$ chosen from the gate set $\set{CZ, Z, S}$. 
It is easy to verify that
\begin{eqnarray}
T^{\gamma_i}X^{s_i}T^{-\gamma_i}
=e^{-i\frac{\pi}{4}\gamma_is_i}S^{\gamma_is_i} X^{s_i},
\end{eqnarray}
for any $\gamma_i, s_i\in\set{0,1}$. 
That is,  $DH^{\ot n}\ot_i Z^{s_i} H^{\ot n }D^\dag$ is a Clifford circuit.
Thus, each Fourier coefficient can be evaluated in $O(n^3)$ by Gottesman-Knill Theorem. 

We also consider the same IQP circuits with input states $ \proj{0}^{\ot n}$, 
then output probability $p(\vec{y})=\bra{\vec{y}}H^{\ot n} D H^{\ot n} \proj{0}^{\ot n} H^{\ot n } D H^{\ot n}  \ket{\vec{y}}$.
Similarly, we insert the operator $X^{\vec{a}}$ as follows
\begin{eqnarray}
p_{\vec{a}}(\vec{y})=\bra{\vec{y}}H^{\ot n} D H^{\ot n} X^{\vec{a}}\proj{0}^{\ot n} X^{\vec{a}} H^{\ot n } D H^{\ot n}  \ket{\vec{y}}.
\end{eqnarray}
And the corresponding Fourier coefficient 
is 
\begin{eqnarray}\label{eq:IQP_p}
\hat{p}_{\vec{s}}(\vec{y}):=
\mathbb{E}_{\vec{a}\in\mathbb{F}^n_2}p_{\vec{a}}(\vec{y})
=\bra{\vec{y}}H^{\ot n} D H^{\ot n}\ot_i Z^{s_i} H^{\ot n } D H^{\ot n}  \ket{\vec{y}}\cdot 2^{-n}.
\end{eqnarray}
Comparing \eqref{eq:IQP_m} with \eqref{eq:IQP_p}, we have the following relation
\begin{eqnarray}
\hat{q}_{\vec{s}}(\vec{y})
=(1-\epsilon)^{|\vec{s}|}\hat{p}_{\vec{s}}(\vec{y}),
\end{eqnarray}
where $|\vec{s}|=\sum_is_i$ is the Hamming weight of $\vec{s}\in \mathbb{F}^n_2$.

Let us define
\begin{equation}
\hat{q}'_{\vec{s}}(\vec{y})=
             \begin{cases}
             \hat{q}'_{\vec{s}}(\vec{y}), & |\vec{s}|\leq l   \\
             0, &  \text{otherwise}
             \end{cases},
\end{equation}
which gives an family of unnormalized probability distribution $\set{q'_{\vec{a}}}$as
$q'_{\vec{a}}(\vec{y})=\sum_{\vec{s}}\hat{q}'_{\vec{s}}(\vec{y})(-1)^{\vec{s}\cdot \vec{a}}$ for each output $\vec{y}\in\mathbb{F}^n_2$.
Then we will show that $q'_{\vec{0}}(\vec{y})$ gives a good approximation of $q_{\vec{0}}(\vec{y})$ with respect to $l_1$ norm
\begin{eqnarray*}
\norm{q'_{\vec{0}}- q_{\vec{0}}}_1
=\sum_{\vec{y}\in \mathbb{F}^n_2}
|q'_{\vec{0}}(\vec{y})-q_{\vec{0}}(\vec{y})|
\end{eqnarray*}
 for a large fraction of IQP circuits. We denote $\mathbb{D}_n$ to be the set of of 
 diagonal part of IQP circuits where
 the diagonal gates are chosen from 
 $\set{ CZ, Z, S, T}$. Since    $ \hat{q}_{\vec{s}}(\vec{y})$  depends on the IQP circuits, denote it as $\hat{q}_{\vec{s}, \vec{t}}(\vec{y})[D]$,  then it is easy to verify that 
\begin{eqnarray*}
 \hat{q}_{\vec{s}}(\vec{y})[D](-1)^{\vec{a}\cdot \vec{s}}
= \hat{q}_{\vec{s}}(\vec{y})[D'],
\end{eqnarray*}
where $D'=D\circ Z^{\vec{a}}$ also belongs to $\mathbb{D}_n$.
Thus 

\begin{eqnarray*}
\mathbb{E}_{D\sim \mathbb{D}_n}\norm{q'_{\vec{0}}- q_{\vec{0}}}^2_1
=\mathbb{E}_{D\sim \mathbb{D}_n}\norm{q'_{\vec{a}}- q_{\vec{a}}}^2_1
=\mathbb{E}_{D\sim \mathbb{D}_n}\mathbb{E}_{\vec{a}\in\mathbb{F}^n_2}\norm{q'_{\vec{a}}- q_{\vec{a}}}^2_1.
\end{eqnarray*}
And
\begin{eqnarray*}
\mathbb{E}_{\vec{a}\in\mathbb{F}^n_2}\norm{q'_{\vec{a}}- q_{\vec{a}}}^2_1
&\leq&  \mathbb{E}_{\vec{a}\in\mathbb{F}^n_2} 2^n\sum_{\vec{y}\in\mathbb{F}^n_2}(q'_{\vec{a}}(\vec{y})- q_{\vec{a}}(\vec{y}))^2\\
&=&2^n\sum_{\vec{y}\in\mathbb{F}^n_2}\mathbb{E}_{\vec{a}\in\mathbb{F}^n_2} (q'_{\vec{a}}(\vec{y})- q_{\vec{a}}(\vec{y}))^2\\
&=&2^n\sum_{\vec{y}\in\mathbb{F}^n_2}\sum_{\vec{s}\in\mathbb{F}^n_2} (\hat{q}'_{\vec{s}}(\vec{y})- \hat{q}_{\vec{s}}(\vec{y}))^2\\
&\leq&2^n(1-\epsilon)^{2l}\sum_{\vec{y}\in\mathbb{F}^n_2}\sum_{|\vec{s}|\geq l} \hat{p}^2_{\vec{s}}(\vec{y})\\
&\leq &2^n(1-\epsilon)^{2l}\sum_{\vec{y}\in\mathbb{F}^n_2}\sum_{\vec{s}\in\mathbb{F}^n_2} \hat{p}^2_{\vec{s}}(\vec{y})\\
&=&2^n(1-\epsilon)^{2l}\sum_{\vec{y}\in\mathbb{F}^n_2}\mathbb{E}_{\vec{a}\in\mathbb{F}^n_2} p^2_{\vec{a}}(\vec{y}),
\end{eqnarray*}
where the first line comes from the Cauchy-Schwarz inequality, the third line comes from Parvesal identity, 
and the fourth line comes from the fact that $\hat{q}_{\vec{s}}(\vec{y})=(1-\epsilon)^{|\vec{s}|}\hat{p}_{\vec{s}}(\vec{y})$.
According to Lemma \ref{lem:av_dis} in Appendix \ref{sec:dis_IQP}, we have 
\begin{eqnarray*}
\mathbb{E}_{D\sim \mathbb{D}_n}\sum_{\vec{y}\in\mathbb{F}^n_2}p^2_{\vec{a}}(\vec{y})
\leq 2^{-(n-1)}.
\end{eqnarray*}
Thus,
we have 
\begin{eqnarray*}
\mathbb{E}_{D\sim \mathbb{D}_n}\norm{q'_{\vec{0}}- q_{\vec{0}}}^2_1
\leq 2e^{-2\epsilon l}.
\end{eqnarray*}
Therefore, by Markov's inequality, we have 
\begin{eqnarray*}
\mathrm{Pr}_{D\sim  \mathbb{D}_n}\left[\norm{q'_{\vec{0}}- q_{\vec{0}}}_1\leq \sqrt{\alpha}e^{-\epsilon l}\right]\geq 1-\frac{2}{\alpha}.
\end{eqnarray*}

Therefore, to obtain the $l_1$ norm up to $\delta$, we need 
take $l=O(\log(\sqrt{\alpha}/\delta)/\epsilon)$ and the total computational complexity 
is 
$O(n^3n^l)=n^{O(\log(\sqrt{\alpha}/\delta)/\epsilon)}$.
\end{proof}

\section{Distribution of IQP circuits based on Gowers uniformity norm}\label{sec:dis_IQP}
Here we consider IQP circuits, which can be represented by $H^{\ot n} D H^{\ot n}\ket{0}^{\ot n}$, where the gates in the diagonal part $D$ are chosen from the gate set
$\set{CZ, Z, S, T}$. Then the output distribution is
$p(\vec{y})=|\bra{\vec{y}}H^{\ot n} D H^{\ot n}\ket{0}^{\ot n}|^2=|\hat{f}(\vec{y})|^2$ for any $\vec{y}\in \mathbb{F}^n_2$, where
$\hat{f}(\vec{y})=\frac{1}{2^n}\sum_{\vec{x}\in \mathbb{F}^n_2}f(\vec{x})(-1)^{\vec{y}\cdot \vec{x}}$
and the function $f$ can be expressed as
\begin{eqnarray}
f(\vec{x})
=(-1)^{\sum_{i<j} \alpha_{ij}x_ix_j+\sum_i\beta_ix_i}
i^{\sum_i \gamma_i x_i}
e^{i\pi/4 \sum_it_ix_i},
\end{eqnarray}
where $\alpha_{ij}, \beta_i, \gamma_i, t_i\in \mathbb{F}_2$,  denote the number 
of $CZ$ between $i$th and $j$th qubits, Z gate on $i$th qubit, S gate on $i$th gate and T gate on $i$th gate. Since 
$T^2=S, S^2=Z$ and $Z^2=I$, then there are at most one $T$, $S$, $Z$ gate on each qubit respectively. 
Thus, $\vec{\beta}, \vec{\gamma}, \vec{t}\in \mathbb{F}^n_2$ and the Hamming weight 
$|\vec{\beta}|, |\vec{\gamma}|, |\vec{t}|$ is the number of $Z$, $S$ and $T$ gates in the IQP circiut.

In fact, the function $f$ can be rewritten as follows
\begin{eqnarray}
f(\vec{x})
=(-1)^{\vec{\beta}\cdot \vec{x}}i^{\vec{x}A\vec{x}}
e^{i\pi/4\vec{t}\cdot \vec{x}},
\end{eqnarray}
where $A_{ii}=\gamma_i$ and $A_{ij}=A_{ji}=\alpha_{ij}$ for $i\neq j$. That is, the matrix $A$ is a
symmetric $0-1$ matrix.

Now, let us introduce the Gowers uniformity norm here. 
Let $G$ be a finite additive group and  $f:G\to \complex$  and an integer $d\geq1$. Then the 
Gowers uniformity norm $\norm{f}_{U^d(G)}$ \cite{tao2006additive} is defined as
\begin{eqnarray}
\norm{f}^{2^d}_{U^d(G)}
=\mathbb{E}_{h_1,..,h_d,x\in G}
\Delta_{h_1}...\Delta_{h_d}f(x),
\end{eqnarray}
where $\Delta_hf(x):=f(x+h)\overline{f(x)}$. 
Here we take $G=\mathbb{F}^n_2$ and the Fourier transformation for $f:\mathbb{F}^n_2\to \complex$ 
is defined as $\hat{f}(\vec{y})=\mathbb{E}_{\vec{x}\in\mathbb{F}^n_2}f(x)(-1)^{\vec{x}\cdot \vec{y}}$, 
where $\mathbb{E}_{\vec{x}\in\mathbb{F}^n_2}:=\frac{1}{2^n}\sum_{\vec{x}\in\mathbb{F}^n_2}$. 
One important property of Gowers uniformity norm, which we will use in the following 
section to demonstrate the distribution of IQP circuits, is the following equality \cite{tao2006additive}
\begin{eqnarray}\label{lem:eq_2}
\norm{f}^4_{U^2(\mathbb{F}^n_2)}
=\sum_{\vec{y}\in \mathbb{F}^n_2}
|\hat{f}(\vec{y})|^4.
\end{eqnarray}

For IQP circuits with diagonal gates chosen from 
$\set{CZ, Z, CCZ}$ randomly, it has been proved that the average value of the second moment of  output probability
satisfies that
$\sum_{\vec{y}}p^2_D(\vec{y})\leq \alpha 2^{-n}$, where $\alpha$ is some constant \cite{BremnerPRL2016}. 
Here, we consider the case where
the gates in the diagonal part $D$ are chosen uniformly, 
i.e., 
$P(\alpha_{ij=1})=P(\beta_i=1)=P(\gamma_i=1)=P(t_i=1)=1/2$, then 
we can give the exact value of average value of the second moment of the output probability of random IQP circuits. 
\begin{lem}\label{lem:av_dis}
Given an IQP circuit, if the gates in the diagonal part $D$ can be chosen uniformly, then 
\begin{eqnarray}
\mathbb{E}_{D}\sum_{\vec{y}\in \mathbb{F}^n_2}p^2_D(\vec{y})
=2^{-(n-1)}-2^{-2n}.
\end{eqnarray}
\end{lem}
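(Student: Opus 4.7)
The plan is to start with the Gowers norm identity \eqref{lem:eq_2}, which lets me rewrite
\begin{eqnarray*}
\sum_{\vec y}p_D^2(\vec y)=\norm{f}^4_{U^2(\mathbb{F}^n_2)}=\mathbb{E}_{x,h_1,h_2}\,f(x)f(x+h_1+h_2)\overline{f(x+h_1)f(x+h_2)},
\end{eqnarray*}
and then swap expectations and compute $\mathbb{E}_D[\,\cdots]$ pointwise in $(x,h_1,h_2)$. Since the four random parts $\vec\beta$, $\vec\gamma$, $\{\alpha_{ij}\}_{i<j}$ and $\vec t$ of $D$ are independent and Bernoulli$(1/2)$, the $D$-average factorizes into four pieces that can be handled one at a time. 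The first piece is immediate: because $x+(x+h_1)+(x+h_2)+(x+h_1+h_2)=0$ in $\mathbb{F}^n_2$, the $(-1)^{\vec\beta\cdot\vec x}$ contribution to the integrand collapses to $1$ identically, so averaging over $\vec\beta$ is trivial.

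For the remaining three pieces I would work coordinate by coordinate, lifting each bit via $a\oplus b=a+b-2ab\in\mathbb Z$. A short case analysis on $(x_j,h_{1,j},h_{2,j})$ shows that the integer exponent of $e^{i\pi/4\,t_j(\cdots)}$ vanishes unless $h_{1,j}=h_{2,j}=1$, and the same is true for the diagonal part $i^{\gamma_j(\cdots)}$ of $i^{\vec x A\vec x}$; on the intersection $S:=\mathrm{supp}(h_1)\cap\mathrm{supp}(h_2)$ the latter exponent turns out to be an even integer, producing the factor $\prod_{j\in S}(-1)^{\gamma_j}$. Averaging independently over each $\gamma_j\in\{0,1\}$ kills this product as soon as $|S|\ge 1$, so only configurations with disjoint supports survive. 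On those, a short $\mathbb{F}_2$ expansion of $(x+h)_i(x+h)_j=(x_i\oplus h_i)(x_j\oplus h_j)$ shows that the $\alpha_{ij}$-exponent of $(-1)$ in the integrand reduces modulo $2$ to $h_{1,i}h_{2,j}+h_{2,i}h_{1,j}$, and independence of the $\alpha_{ij}$'s forces this parity to vanish for every pair $i<j$. Under disjoint supports this condition fails as soon as both $h_1$ and $h_2$ are nonzero (take $i\in\mathrm{supp}(h_1),\,j\in\mathrm{supp}(h_2)$), leaving only the tuples $h_1=0$ or $h_2=0$. On each such tuple the integrand equals $1$ because $|f|\equiv 1$, and the final answer is the probability
\begin{eqnarray*}
\Pr_{h_1,h_2}[h_1=0\ \text{or}\ h_2=0]=2\cdot 2^{-n}-2^{-2n}=2^{-(n-1)}-2^{-2n},
\end{eqnarray*}
matching the claim.

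The main obstacle is the careful integer-versus-$\mathbb{F}_2$ bookkeeping inside $i^{\vec x A\vec x}$: one has to split it into a diagonal part $i^{\gamma_j x_j}$ (a genuine fourth root of unity, yielding the $(-1)^{\gamma_j}$ on $S$) and an off-diagonal part $(-1)^{\alpha_{ij}x_ix_j}$, evaluate the former modulo $4$ and the latter modulo $2$, and argue that these two independent averages together cut the surviving configurations down to $\{h_1=0\}\cup\{h_2=0\}$. The $\vec t$-contribution plays no additional role, since the support condition it imposes is strictly weaker than the one already forced by the $\vec\gamma$-average.
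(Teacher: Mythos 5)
Your proof is correct and follows essentially the same route as the paper: both start from the Gowers identity $\sum_{\vec y}p_D^2(\vec y)=\norm{f}^4_{U^2(\mathbb{F}^n_2)}$, expand the second difference of the phase polynomial, and use independence of the $\vec\beta,\vec\gamma,\{\alpha_{ij}\},\vec t$ parameters to conclude that only the pairs with $h_1=0$ or $h_2=0$ survive (the $\gamma$-average killing overlapping supports and the $\alpha$-average killing the rest), yielding $(2^{n+1}-1)/4^n=2^{-(n-1)}-2^{-2n}$. The only real difference is the order of the averages — the paper integrates over $\vec x$ first and then counts the surviving $(\vec a,\vec b)$ coordinate patterns, whereas you take $\mathbb{E}_D$ pointwise first — which is a cosmetic reordering rather than a different method.
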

\begin{proof}
Due to the equation \eqref{lem:eq_2}, we have
\begin{eqnarray}
\sum_{\vec{y}\in\mathbb{F}^n_2}
|p(\vec{y})|^2
=\sum_{\vec{s}\in \mathbb{F}^n_2}
|\hat{f}(\vec{s})|^4
=\norm{f}^4_{U^2(\mathbb{F}^n_2)}.
\end{eqnarray}
 For the function $f(\vec{x})
=(-1)^{\vec{\beta}\cdot \vec{x}}i^{\vec{x}A\vec{x}}
e^{i\pi/4\vec{t}\cdot \vec{x}}$, the Gowers uniformity norm  
$\norm{f}_{U^2(\mathbb{F}^n_2)}$ can be expressed as follows

\begin{eqnarray*}
\norm{f}^4_{U^2(\mathbb{F}^n_2)}
&=&\mathbb{E}_{\vec{a}, \vec{b}, \vec{x}\in \mathbb{F}^n_2}
f(\vec{x}\oplus\vec{a}\oplus\vec{b})
f(\vec{x})\overline{f(\vec{x}\oplus \vec{a})f(\vec{x}\oplus \vec{b})}\\
&=&\mathbb{E}_{\vec{a}, \vec{b}, \vec{x}\in \mathbb{F}^n_2} i^{2\vec{a}A\vec{b}}
e^{i\pi/4 \sum_i t_i[(x_i\oplus a_i\oplus b_i)+x_i-(x_i\oplus a_i)-(x_i\oplus b_i)]}\\
&=&\mathbb{E}_{\vec{a}, \vec{b}, \vec{x}\in \mathbb{F}^n_2}(-1)^{\vec{a}A\vec{b}}
e^{i\pi/4 \sum_i t_i[(x_i\oplus a_i\oplus b_i)+x_i-(x_i\oplus a_i)-(x_i\oplus b_i)]}\\\
&=&\mathbb{E}_{\vec{a}, \vec{b} \in \mathbb{F}^n_2}
(-1)^{\vec{a}A\vec{b}}
\mathbb{E}_{ \vec{x}\in \mathbb{F}^n_2}
e^{i\pi/4 \sum_i t_i[(x_i\oplus a_i\oplus b_i)+x_i-(x_i\oplus a_i)-(x_i\oplus b_i)]}\\
&=&\mathbb{E}_{\vec{a}, \vec{b}\in \mathbb{F}^n_2}(-1)^{\vec{a}A\vec{b}}
\prod^n_{i=1}
\mathbb{E}_{x_i\in \mathbb{F}_2}
e^{i\pi/4 t_i[(x_i\oplus a_i\oplus b_i)+x_i-(x_i\oplus a_i)-(x_i\oplus b_i)]}.
\end{eqnarray*}

 It is easy to verify that 
\begin{eqnarray*}
\mathbb{E}_{x\in \mathbb{F}_2}e^{i\pi/4 t[(x\oplus a\oplus b)+x-(x\oplus a)-(x\oplus b)]}
=\frac{1+(-1)^{tab}}{2},
\end{eqnarray*}
for any $t, a, b\in \mathbb{F}_2$. Thus, we have 
\begin{eqnarray*}
\norm{f}^4_{U^2(\mathbb{F}^n_2)}=\mathbb{E}_{\vec{a}, \vec{b}\in \mathbb{F}^n_2}
(-1)^{\vec{a}A\vec{b}}
\prod^n_{i=1}\left[ \frac{1+(-1)^{t_ia_ib_i}}{2}  \right].
\end{eqnarray*}

The expected value of $\sum_{\vec{y}\in \mathbb{F}^n_2}p^2_D(\vec{y})$ over the random IQP circuits is
\begin{eqnarray*}
&&\mathbb{E}_{D}\sum_{\vec{y}\in\mathbb{F}^n_2}p^2_D(\vec{y})\\
&=&\mathbb{E}_{D}\norm{f_D}^4_{U^2(\mathbb{F}^n_2)}\\
&=&\mathbb{E}_{\set{\alpha_{ij}, \beta_i,\gamma_i, t_i}}\mathbb{E}_{\vec{a}, \vec{b}\in \mathbb{F}^n_2}
(-1)^{\vec{a}A\vec{b}}
\prod^n_{i=1}\left[ \frac{1+(-1)^{t_ia_ib_i}}{2}  \right]\\
&=&\mathbb{E}_{\vec{a}, \vec{b}\in \mathbb{F}^n_2}
\mathbb{E}_{\set{\alpha_{ij}, \beta_i,\gamma_i, t_i}}
(-1)^{\vec{a}A\vec{b}}
\prod^n_{i=1}\left[ \frac{1+(-1)^{t_ia_ib_i}}{2}  \right]\\
&=&\mathbb{E}_{\vec{a}, \vec{b}\in \mathbb{F}^n_2}
\prod_{i<j}\left[ \frac{1+(-1)^{a_ib_j+b_ia_j}}{2}\right]
\prod^n_{i=1}\left[ \frac{1+(-1)^{a_ib_i}}{2}\right]
\left[ \frac{3+(-1)^{a_ib_i}}{4}\right].\\
\end{eqnarray*}
Since
\begin{equation*}
\frac{1+(-1)^{a_i, b_i}}{2}=
             \begin{cases}
             0, &  (a_i, b_i)=(1,1)   \\
              1, &  \text{otherwise}
             \end{cases},
\end{equation*}
then the above  equation is equal to 
\begin{eqnarray*}
\frac{1}{4^n}\sum_{\times^n_{i=1}(a_i, b_i)\in\set{(0,0), (0,1), (1,0)}^{\times n}}
\prod_{i<j}\left[ \frac{1+(-1)^{a_ib_j+b_ia_j}}{2}\right]
=\frac{1}{4^n}\sum_{\times^n_{i=1}(a_i, b_i)\in\set{(0,0), (0,1), (1,0)}^{\times n}}
\prod_{i<j}\left[ \frac{1+(-1)^{(a_i+a_j)(b_i+b_j)}}{2}\right],
\end{eqnarray*}
where the equality comes from the fact that 
\begin{eqnarray}
a_ib_j+b_ia_j=(a_i+a_j)(b_i+b_j)-(a_ib_i+a_jb_j)
=(a_i+a_j)(b_i+b_j),
\end{eqnarray}
when $(a_i, b_i), (a_j, b_j)$ are chosen from 
$\set{(0,0), (0,1), (1,0)}$.
Moreover, for $(a_i, b_i), (a_j, b_j)\in\set{(0,0), (0,1), (1,0)}$, we have

\begin{equation*}
\frac{1+(-1)^{(a_i+a_j)(b_i+b_j)}}{2}=
             \begin{cases}
             0, &  (a_i, b_i, a_j, b_j)=(1,0,0,1), (0,1,1,0)   \\
              1, &  \text{otherwise}
             \end{cases}.
\end{equation*}
Thus, 
\begin{eqnarray*}
\sum_{\times^n_{i=1}(a_i, b_i)\in\set{(0,0), (0,1), (1,0)}^{\times n}}
\prod_{i<j}\left[ \frac{1+(-1)^{(a_i+a_j)(b_i+b_j)}}{2}\right]
=\left(\sum_{\times^n_{i=1}(a_i, b_i)\in\set{(0,0), (0,1)}^{\times n}}1\right)
+\left(\sum_{\times^n_{i=1}(a_i, b_i)\in\set{(0,0), (10)}^{\times n}}1\right)
-1
=2^{n+1}-1.
\end{eqnarray*}
Therefore, we obtain the result that 
\begin{eqnarray*}
\mathbb{E}_{D}\sum_sp^2_D(s)
=\frac{1}{4^n}[2^{n+1}-1].
\end{eqnarray*}

\end{proof}

Besides, based on the Gowers uniformity norm, we can also give an approximation of 
the second moment for any IQP circuit. 
\begin{prop}
Given an  IQP circuit with the diagonal gates chosen from $\set{CZ, Z, S, T}$, then the  output probability of this circuit satisfies the 
following property,
\begin{eqnarray}
\sum_{\vec{y}\in \mathbb{F}^n_2}p^2(\vec{y})
\leq 2^{-c|\vec{t}|-\mathrm{Rank} (A(\vec{t}))},
\end{eqnarray}
where the constant  $c=\log \frac{4}{3}>0$, 
$A(\vec{t})$ is the matrix obtained from $A$ by removing the rows  and columns $i$ such that $t_i=1$ and 
$\mathrm{Rank}  (A(\vec{t}))$ denotes the rank of the matrix $A(\vec{t}) $ in $\mathbb{F}_2$.
Moreover, if $\vec{t}=0$, i.e., there is no $T$ gate,  then
\begin{eqnarray}
\sum_{\vec{y}\in \mathbb{F}^n_2}p^2(\vec{y})
=2^{-\mathrm{Rank} (A) }.
\end{eqnarray}
\end{prop}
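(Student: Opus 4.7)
The plan is to exploit the character-sum identity established in the proof of Lemma~5:
\begin{eqnarray*}
\sum_{\vec{y}\in\mathbb{F}_2^n}p^2(\vec{y})=\|f\|^4_{U^2(\mathbb{F}_2^n)}=\mathbb{E}_{\vec{a},\vec{b}\in\mathbb{F}_2^n}\,(-1)^{\vec{a}A\vec{b}}\prod_{i=1}^n\frac{1+(-1)^{t_ia_ib_i}}{2}.
\end{eqnarray*}
The key observation is that the product factor is the indicator of the event that $(a_i,b_i)\neq(1,1)$ for every $i$ with $t_i=1$. Partitioning the coordinates into $S:=\{i:t_i=1\}$ (so $|S|=|\vec{t}|$) and $\bar S:=\{i:t_i=0\}$, this restricts $(\vec{a}_S,\vec{b}_S)$ to at most $3^{|\vec{t}|}$ admissible configurations, while leaving $(\vec{a}_{\bar S},\vec{b}_{\bar S})\in\mathbb{F}_2^{2|\bar S|}$ unconstrained.

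Next I would expand $\vec{a}A\vec{b}$ into its four block contributions according to this partition. For each fixed admissible $(\vec{a}_S,\vec{b}_S)$, the remaining inner sum takes the form
\begin{eqnarray*}
\sum_{\vec{a}',\vec{b}'\in\mathbb{F}_2^{|\bar S|}}(-1)^{\vec{a}'A(\vec{t})\vec{b}'+\vec{u}\cdot\vec{b}'+\vec{a}'\cdot\vec{v}},
\end{eqnarray*}
where the linear perturbations $\vec{u},\vec{v}$ are determined by the off-diagonal blocks of $A$ and the fixed boundary data. Summing over $\vec{a}'$ first collapses this to $2^{|\bar S|}$ times the indicator that $A(\vec{t})\vec{b}'=-\vec{v}$; summing then over the (possibly empty) affine solution set, whose dimension is $|\bar S|-\mathrm{Rank}\,A(\vec{t})$, yields an expression whose absolute value is at most $2^{2|\bar S|-\mathrm{Rank}\,A(\vec{t})}$.

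Combining these two estimates via the triangle inequality and dividing by $4^n=2^{2|S|+2|\bar S|}$ gives
\begin{eqnarray*}
\sum_{\vec{y}}p^2(\vec{y})\leq\frac{3^{|\vec{t}|}\cdot 2^{2|\bar S|-\mathrm{Rank}\,A(\vec{t})}}{4^n}=(3/4)^{|\vec{t}|}\,2^{-\mathrm{Rank}\,A(\vec{t})}=2^{-c|\vec{t}|-\mathrm{Rank}\,A(\vec{t})},
\end{eqnarray*}
with $c=\log_2(4/3)$, which is the desired inequality. For the equality claim at $\vec{t}=\vec{0}$, $S$ is empty, the admissible set is a single point, and both $\vec{u}$ and $\vec{v}$ vanish, so the inner sum evaluates to exactly $2^{2n-\mathrm{Rank}\,A}$, giving $\sum_{\vec{y}}p^2(\vec{y})=2^{-\mathrm{Rank}\,A}$ with no estimation loss. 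The main technical step is the $\mathbb{F}_2$ character-sum evaluation together with the bookkeeping of the block decomposition of $\vec{a}A\vec{b}$; potential inconsistency of $A(\vec{t})\vec{b}'=-\vec{v}$ only forces the inner sum to vanish, so it tightens rather than breaks the bound.
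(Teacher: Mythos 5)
Your proposal is correct and follows essentially the same route as the paper's proof: the Gowers-norm identity, restriction of the $T$-gate coordinates to the $3^{|\vec{t}|}$ admissible pairs, and the $\mathbb{F}_2$ character-sum over the remaining block yielding the kernel/rank factor $2^{-\mathrm{Rank}\,A(\vec{t})}$. The only cosmetic difference is that you sum over $\vec{a}'$ first and use the triangle inequality over boundary configurations, whereas the paper sums over $\vec{b}_{n-k}$ and takes a maximum over the induced linear terms; these are equivalent since $A(\vec{t})$ is symmetric.
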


\begin{proof}
Due to the equation \eqref{lem:eq_2} and  Lemma \ref{lem:av_dis}, we have
\begin{eqnarray*}
\sum_{\vec{y}\in\mathbb{F}^n_2}
p^2(\vec{y})
=\sum_{\vec{s}\in \mathbb{F}^n_2}
|\hat{f}(\vec{s})|^4
=\norm{f}^4_{U^2(\mathbb{F}^n_2)}
=\mathbb{E}_{\vec{a}, \vec{b}\in \mathbb{F}^n_2}
(-1)^{\vec{a}A\vec{b}}
\prod^n_{i=1}\left[ \frac{1+(-1)^{t_ia_ib_i}}{2}  \right].
\end{eqnarray*}
Thus, we need estimate  the Gower uniform norm $\norm{f}_{U^2(\mathbb{F}^n_2)}$
for the phase polynomial $f(\vec{x})
=(-1)^{\vec{\beta}\cdot \vec{x}}i^{\vec{x}A\vec{x}}
e^{i\pi/4\vec{t}\cdot \vec{x}}$ by the Hamming weight $|\vec{t}|$ and 
the rank of the symmetric matrix $A$.

Without loss of generality, we assume the first $k=|\vec{t}|$ qubits have $T$ gates, i.e.,
$t_1=\ldots=t_k=1$, and the remaining qubits do not have $T$ gate, then we can decompose the symmetric matrix A as follows 
\begin{equation*}
A=\left[
\begin{array}{cc}
A_{k,k}& A_{k,n-k}\\
A_{n-k,k}  &A_{n-k, n-k}\\
\end{array}
\right],
\end{equation*}
where $A_{k,k}$ is an $k \times k$ symmetric matrix, 
$A_{n-k,n-k}$ is an $(n-k)\times (n-k)$ symmetric matrix and
$A_{n-k,k}=A^t_{k, n-k}$. Similarly, we also decompose 
the vectors $\vec{a}, \vec{b}$ as 
\begin{equation*}
\vec{a}=
\left[
\begin{array}{cc}
\vec{a}_k\\
\vec{a}_{n-k}
\end{array}
\right],
\vec{b}=
\left[
\begin{array}{cc}
\vec{b}_k\\
\vec{b}_{n-k}
\end{array}
\right],
\end{equation*}
where 
$\vec{a}_k, \vec{b}_k\in \mathbb{F}^k_2$ and 
$\vec{a}_{n-k}, \vec{b}_{n-k}\in \mathbb{F}^{n-k}_2$.
Thus, 
\begin{eqnarray*}
\norm{f}^4_{U^2(\mathbb{F}^n_2)}
&=&\mathbb{E}_{\vec{a}, \vec{b}\in \mathbb{F}^n_2}
(-1)^{\vec{a}A\vec{b}}
\left[\prod^k_{i=1} \frac{1+(-1)^{a_ib_i}}{2}   \right]\\
&=&
\mathbb{E}_{\vec{a}_k, \vec{b}_k\in \mathbb{F}^k_2}
(-1)^{\vec{a}_kA_{k,k}\vec{b}_k }
\left[\prod^k_{i=1} \frac{1+(-1)^{a_ib_i}}{2}   \right]\mathbb{E}_{\vec{a}_{n-k}, \vec{b}_{n-k}\in \mathbb{F}^{n-k}_2}
(-1)^{\vec{a}_{n-k}A_{n-k,n-k}\vec{b}_{n-k}+\vec{a}_kA_{k, n-k} \vec{b}_{n-k}+\vec{a}_{n-k}A_{n-k, k}\vec{b}_k}.
\end{eqnarray*}
Since
\begin{equation}
\frac{1+(-1)^{a_i, b_i}}{2}=
             \begin{cases}
             0, &  (a_i, b_i)=(1,1)   \\
              1, &  \text{otherwise}
             \end{cases},
\end{equation}
then
\begin{eqnarray*}
&&\left|\mathbb{E}_{\vec{a}_k, \vec{b}_k\in \mathbb{F}^k_2}
(-1)^{\vec{a}_kA_{k,k}\vec{b}_k }
\left[\prod^k_{i=1} \frac{1+(-1)^{a_ib_i}}{2}   \right]\mathbb{E}_{\vec{a}_{n-k}, \vec{b}_{n-k}\in\mathbb{F}^{n-k}_2}(-1)^{\vec{a}_{n-k}A_{n-k,n-k}\vec{b}_{n-k}+\vec{a}_kA_{k, n-k} \vec{b}_{n-k}+\vec{a}_{n-k}A_{n-k, k}\vec{b}_k}\right|\\
&=&
\left|\frac{1}{4^k}
\sum_{\times^k_{i=1}(a_i, b_i)\in \set{(0,0), (0,1), (1,0)}^{k}}
(-1)^{\vec{a}_kA_{k,k}\vec{b}_k}\mathbb{E}_{\vec{a}_{n-k}, \vec{b}_{n-k}\in\mathbb{F}^{n-k}_2}(-1)^{\vec{a}_{n-k}A_{n-k,n-k}\vec{b}_{n-k}+\vec{a}_kA_{k, n-k} \vec{b}_{n-k}+\vec{a}_{n-k}A_{n-k, k}\vec{b}_k}\right|\\
&\leq &
\frac{1}{4^k}\sum_{\times^k_{i=1}(a_i, b_i)\in \set{(0,0), (0,1), (1,0)}^{k}}
\left|
\mathbb{E}_{\vec{a}_{n-k}, \vec{b}_{n-k}}
(-1)^{\vec{a}_{n-k}A_{n-k,n-k}\vec{b}_{n-k}+\vec{a}_kA_{k, n-k} \vec{b}_{n-k}+\vec{a}_{n-k}A_{n-k, k}\vec{b}_k}
\right|\\
&\leq &
\left(\frac{3}{4}\right)^k \max_{\vec{x}, \vec{y}\in\mathbb{F}^{n-k}_2}
\left|
\mathbb{E}_{\vec{a}_{n-k},\vec{b}_{n-k}\in\mathbb{F}^{n-k}_2}
(-1)^{\vec{a}_{n-k}A_{n-k,n-k}\vec{b}_{n-k}+\vec{x}\cdot \vec{b}_{n-k}+\vec{y}\cdot\vec{a}_{n-k}}
\right|.
\end{eqnarray*}
Besides, for any $\vec{x}, \vec{y}\in\mathbb{F}^{n-k}_2$, 

\begin{eqnarray*}
&&
\left|\mathbb{E}_{\vec{a}_{n-k}, \vec{b}_{n-k}\in \mathbb{F}^{n-k}_2}
(-1)^{\vec{a}_{n-k}A_{n-k,n-k}\vec{b}_{n-k}+\vec{x}\cdot \vec{b}_{n-k}+\vec{y}\cdot\vec{a}_{n-k}}\right|\\
&=&
\left|\mathbb{E}_{\vec{a}_{n-k}\in \mathbb{F}^{n-k}_2}(-1)^{\vec{y}\cdot \vec{a}_{n-k}}
\mathbb{E}_{\vec{b}_{n-k}\in \mathbb{F}^{n-k}_2}(-1)^{(A_{n-k, n-k}\vec{a}_{n-k}+\vec{x})^T\vec{b}_{n-k}}
\right|\\
&= &
\left|\mathbb{E}_{\vec{a}_{n-k}\in \mathbb{F}^{n-k}_2}
\delta_{A_{n-k, n-k}\vec{a}_{n-k}, \vec{x}}
(-1)^{\vec{y}\cdot \vec{a}_{n-k}}
\right|\\
&\leq &
\frac{|\set{\vec{a}_{n-k}: A_{n-k,n-k}\vec{a}_{n-k}=\vec{x}}|}{2^{n-k}}\\
&\leq & \frac{1}{2^{n-k}} \left|\text{Ker} (A_{n-k, n-k})\right|\\
&=&\frac{1}{2^{\text{Rank} (A_{n-k, n-k})}},
\end{eqnarray*}
where $\text{Rank} (A_{n-k, n-k})$ denotes the rank of the matrix $A_{n-k,n-k}$
in $\mathbb{F}_2$. Therefore, 
\begin{eqnarray*}
\norm{f}^4_{U^2(\mathbb{F}^n_2)}
\leq\left( \frac{3}{4} \right)^k 
\frac{1}{2^{\text{Rank} (A_{n-k, n-k})}}
=2^{-ck-\text{Rank} (A_{n-k, n-k})},
\end{eqnarray*}
where $c=\log \frac{4}{3}$. 

Moreover, if $\vec{t}=0$, then
\begin{eqnarray*}
\norm{f}^4_{U^2(\mathbb{F}^n_2)}
=\mathbb{E}_{\vec{a}\in \mathbb{F}^n_2}\mathbb{E}_{ \vec{b}\in \mathbb{F}^n_2}
(-1)^{\vec{a}A\vec{b}}
=\mathbb{E}_{\vec{a}\in \mathbb{F}^n_2}
\delta_{A\vec{a}, \vec{0}}
=\frac{\text{Ker} (A)}{2^n}
=2^{-\text{Rank} (A)}.
\end{eqnarray*}

\end{proof}

\section{Efficent classical simulation with pure nonstabilizer input states}\label{sec:non_stab}

\subsection{Proof of Theorem 3}

\begin{lem}
For any pure state $\ket{\psi}$ in $D(\complex^2)$, the stabilizer fidelity can be expressed as
\begin{eqnarray}
F(\psi)=\frac{1}{2}\left(1+\max_{P\in \set{X, Y, Z}} |\bra{\psi}P\ket{\psi}|\right).
\end{eqnarray}
\end{lem}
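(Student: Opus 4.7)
The plan is to enumerate single-qubit stabilizer states explicitly and compute the overlap with $\ket{\psi}$ directly. On a single qubit there are exactly six stabilizer states, namely the $\pm 1$ eigenstates of the three Pauli operators $X$, $Y$, $Z$. Each such state $\ket{\phi}$ is a $\pm 1$ eigenvector of some $P\in\{X,Y,Z\}$, and the corresponding projector admits the clean form $\proj{\phi}=\tfrac{1}{2}(I+\eta P)$ with $\eta\in\{+1,-1\}$.

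Using this, I would compute
\begin{equation*}
\abs{\iinner{\phi}{\psi}}^{2}=\trace\!\Pa{\proj{\phi}\proj{\psi}}=\tfrac{1}{2}\pa{1+\eta\bra{\psi}P\ket{\psi}}.
\end{equation*}
Taking the maximum over the sign $\eta$ collapses to $\tfrac{1}{2}(1+\abs{\bra{\psi}P\ket{\psi}})$, and then taking the maximum over the three Pauli choices yields the claimed formula
\begin{equation*}
F(\psi)=\max_{\ket{\phi}}\abs{\iinner{\phi}{\psi}}^{2}=\tfrac{1}{2}\Pa{1+\max_{P\in\{X,Y,Z\}}\abs{\bra{\psi}P\ket{\psi}}}.
\end{equation*}

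The only point that needs a small justification is the claim that the maximization over all pure single-qubit stabilizer states really reduces to the six Pauli eigenstates; this is standard (the single-qubit Clifford group acts transitively on these six states and they form the entire stabilizer state set on one qubit), and can be cited or verified by noting that any stabilizer state must be stabilized by some $\pm P$ with $P\in\{X,Y,Z\}$. There is no genuine obstacle: the proof is essentially a one-line computation once the six-state enumeration is in place, and the argument is completely parallel to the Bloch-sphere picture where $F(\psi)$ is $\tfrac{1}{2}(1+r_\infty)$ with $r_\infty$ being the $\ell_\infty$ norm of the Bloch vector.
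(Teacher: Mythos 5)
Your proposal is correct and follows essentially the same route as the paper: the paper's proof likewise observes that the single-qubit stabilizer states are exactly the projectors $\frac{1}{2}(I\pm P)$ with $P\in\set{X,Y,Z}$, from which the formula follows by the overlap computation you spell out. You simply make explicit the trace calculation and the sign/Pauli maximization that the paper leaves implicit.
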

\begin{proof}
This follows directly from the fact the single-qubit stabilizer states are the eigenstates of $X, Y, Z$, that is,
the stabilizer states have the form
$\proj{\phi}=\frac{I\pm P}{2}$, where $P\in\set{X, Y, Z}$.
\end{proof}

Thus $\mu(\psi)$ can also be expressed as
\begin{eqnarray}
\mu(\psi)=1-\max_{P\in \set{X, Y, Z}} |\bra{\psi}P\ket{\psi}|.
\end{eqnarray}

Now, let us begin the proof of Theorem 3.
Since $\ket{\psi}$ has  the Bloch sphere representation as $\proj{\psi}=\frac{1}{2}\sum_{s,t\in\mathbb{F}_2}\psi_{st}X^{s}Z^t$, it is easy to see that
\begin{eqnarray}\label{eq:p_p}
|\psi_{st}|\leq 1-\mu(\psi). 
\end{eqnarray}
for any $(s, t)\neq (0,0)$.
 
Without loss of generality, we assume the first $k$ qubits are measured as the swap gate belongs to $\mathcal{C}l_{n+m}$. 
Then the output probability  is 
\begin{eqnarray}
q(\vec{y})=\Tr{U\proj{0}^{\ot n} \ot^m_{i=1}\proj{\psi_i} U^\dag \proj{\vec{y}}\ot I_{n+m-k}},
\end{eqnarray}
for any $\vec{y}\in \mathbb{F}^k_2$, where $I_{n+m-k}$ denotes the identity on the $k+1,..., (n+m)$th qubits. 
Let us  insert the Pauli operator $X^{\vec{b}}Z^{\vec{a}}$ into the circuit and the corresponding output probability 
\begin{eqnarray}
q_{\vec{a}, \vec{b}}(\vec{y})=\Tr{U\proj{0}^{\ot n}\ot (X^{\vec{b}}Z^{\vec{a}}\ot^m_{i=1}\proj{\psi_i}Z^{\vec{a}} X^{\vec{b}})U^\dag \proj{\vec{y}}\ot I_{n+m-k}}.
\end{eqnarray}
The corresponding Fourier coefficient is
\begin{eqnarray}
\hat{q}_{\vec{s}, \vec{t}}(\vec{y})
=\mathbb{E}_{\vec{a}\in\mathbb{F}^m_2, \vec{b}\in\mathbb{F}^m_2}q_{\vec{a}, \vec{b}}(\vec{y})(-1)^{\vec{s}\cdot\vec{a}+\vec{t}\cdot\vec{b}}
=\Tr{U\proj{0}^{\ot n}\ot^m_{i=1}X^{s_i}Z^{t_i} U^\dag \proj{\vec{y}}\ot I_{n+m-k}}\prod^m_{i=1}\left(\frac{\psi^{(i)}_{s_it_i}}{2}\right).
\end{eqnarray}

Now let us define the reference Hermitian operator with respect to $\psi$ as follows
\begin{eqnarray}\label{eq:Oref}
O(\psi):=\frac{1}{2}(I+\mathrm{sgn}(|\psi_{10}|)X+\sgn(|\psi_{01}|)Z+\sgn(|\psi_{11}|)iXZ),
\end{eqnarray}
where the function $\sgn$ is defined as $\sgn(x)=1$ if $x>0$, $\sgn(x)=0$ if $x=0$.
It is easy to verify that $\Tr{O(\psi)}=1, \Tr{O(\psi)^2}=\frac{\chi(\psi)}{2}$, where $\chi(\psi)$ is the Pauli rank of $\psi$.
Besides, we have 
\begin{eqnarray}
|O_{st}|=|\Tr{X^s OZ^t}|=\sgn(|\psi_{st}|). 
\end{eqnarray}
Combined with \eqref{eq:p_p}, we have the following relation
\begin{eqnarray}\label{eq:p_o}
|\psi_{st}|\leq (1-\mu(\psi))^{w(s, t)}|O_{st}|,
\end{eqnarray}
for any $s,t\in\mathbb{F}_2$.
We also define $o_{\vec{a}, \vec{b}}$ as follows
\begin{eqnarray}
o_{\vec{a}, \vec{b}}(\vec{y})=\Tr{U\proj{0}^{\ot n}\ot (X^{\vec{b}}Z^{\vec{a}}\ot^m_{i=1}O_iZ^{\vec{a}} X^{\vec{b}}) U^\dag \proj{\vec{y}}\ot I_{n+m-k}},
\end{eqnarray}
where each $O_i$ is the reference Hermitian operator 
with respect to $\psi_i$ defined as \eqref{eq:Oref}
and the corresponding Fourier coefficient is 
\begin{eqnarray}
\hat{o}_{\vec{s}, \vec{t}}(\vec{y})=
\mathbb{E}_{\vec{a}\in\mathbb{F}^m_2,\vec{b}\in\mathbb{F}^m_2 }o_{\vec{a}, \vec{b}}(\vec{y})(-1)^{\vec{s}\cdot\vec{a}+\vec{t}\cdot\vec{b}}
=\Tr{U\proj{0}^{\ot n}\ot^m_{i=1}X^{s_i}Z^{t_i} U^\dag \proj{\vec{y}}\ot I_{n+m-k}}\prod^m_{i=1}\left(\frac{O^{(i)}_{s_it_i}}{2}\right).
\end{eqnarray}

Thus, in terms of the relation \eqref{eq:p_o}, we have 
\begin{eqnarray}
|\hat{q}_{\vec{s}, \vec{t}}(\vec{y})|
\leq (1-\mu)^{w(\vec{s}, \vec{t})} |\hat{o}_{\vec{s}, \vec{t}}(\vec{y})|,
\end{eqnarray}
where $\mu=\min_i\mu(\psi_i)$ and $w(\vec{s}, \vec{t})$ is defined as \eqref{eq:w}.

Let us define
\begin{equation}
\hat{q}'_{\vec{s},\vec{t}}(\vec{y})=
             \begin{cases}
             \hat{q}_{\vec{s}, \vec{t}}(\vec{y}), & w(\vec{s},\vec{t})\leq l   \\
             0, &  \text{otherwise}
             \end{cases},
\end{equation}
which gives a family of unnormalized probability distribution $\set{q'_{\vec{a},\vec{b}}}$as
$q'_{\vec{a},\vec{b}}(\vec{y})=\sum_{\vec{s}, \vec{t} }\hat{q}'_{\vec{s},\vec{t}}(\vec{y})(-1)^{\vec{s}\cdot \vec{a}+\vec{t}\cdot\vec{b}}$ for each output $\vec{y}\in\mathbb{F}^k_2$
Similar to the proof of Theorem 1,   we show that $q'_{\vec{0}, \vec{0}}(\vec{y})$ gives a good approximation of $q_{\vec{0}, \vec{0}}(\vec{y})$  with respect to $l_1$ norm
 for a large fraction of 
Clifford circuits. 

It is easy to verify that  the equations \eqref{eq:cli_eq} and \eqref{eq:ave_u} still hold, and we can repeat the process of inequality \eqref{eq:For_a} and obtain the following inequality
\begin{eqnarray*}
\mathbb{E}_{\vec{a}\in\mathbb{F}^m_2, \vec{b}\in \mathbb{F}^m_2}\norm{q'_{\vec{a},\vec{b}}- q_{\vec{a},\vec{b}}}^2_1
\leq2^k(1-\mu)^{2l}\sum_{\vec{y}\in\mathbb{F}^k_2}\mathbb{E}_{\vec{a}\in\mathbb{F}^m_2, \vec{b}\in \mathbb{F}^m_2} o^2_{\vec{a}, \vec{b}}(\vec{y}).
\end{eqnarray*}

By the Lemma \ref{lem:2-des}, we have 
\begin{eqnarray*}
\mathbb{E}_{U\sim \mathcal{C}l_{n+m}}o^2_{\vec{a}, \vec{b}}(\vec{y})
\leq 2^{-n-m-k}\prod^m_{i=1}\frac{\chi(\psi_i)}{2}+2^{-2k}.
\end{eqnarray*}
Since $k\leq n+m-\sum^m_{i=1}\log_2\left(\frac{\chi(\psi_i)}{2}\right)$ ,  then we have
\begin{eqnarray*}
\mathbb{E}_{U\sim \mathcal{C}l_{n+m}}\norm{q'_{\vec{0},\vec{0}}- q_{\vec{0},\vec{0}}}^2_1
\leq 2e^{-2\mu l}.
\end{eqnarray*}
By Markov's inequality, we have
\begin{eqnarray*}
\mathrm{Pr}_{U\sim  \mathcal{C}l_{n+m}}\left[\norm{q'_{\vec{0},\vec{0}}- q_{\vec{0},\vec{0}}}_1\leq \sqrt{\alpha}e^{-\mu l}\right]\geq 1-\frac{2}{\alpha}.
\end{eqnarray*}
Therefore, to obtain the $l_1$ norm up to $\delta$, we need take
$l=O(\log(\sqrt{\alpha}/\delta)/\mu)$ and evaluate the Fourier coefficients $\hat{q}'_{\vec{s}, \vec{t}}(\vec{y})$ with $w(\vec{s}, \vec{t})\leq l$,
where the total amount of such Fourier coefficients  is $\sum_{i\leq l}3^i C^i_m\leq 3^lm^l$. Thus, 
 there exists a classical algorithm 
to approximate each output probability $q(\vec{y})$ in time $O((n+m)^3)m^l=(n+m)^{O(1)}m^{O(\log(\sqrt{\alpha}/\delta)/\mu)}$ with $l_1$ norm less than $\delta$ for at least $1-\frac{2}{\alpha}$  fraction of
Clifford circuits.
Thus, we finish the proof of Theorem 3. 

Moreover, if the quantum circuit $\mathcal{C}$ is slightly beyond the Clifford circuits, e.g.
$\mathcal{C}=\mathcal{C}_1\circ V$ where the gates in $\mathcal{C}_1$
are Clifford gates and $V$ is some unitary gate in third level of Clifford  Hierarchy, then the result in 
Theorem 3 still works, as the unitary in third level of Clifford hierarchy maps Pauli operators to Clifford unitaies and thus the
discussion in Appendix \ref{subsec:slig_cl}  still works.

\subsection{Property of Pauli rank}
At the end of this section, let us introduce several basic properties of Pauli rank. For any pure state $\ket{\psi}$ on 
$n$ qubits, we have the Bloch sphere representation 
\begin{eqnarray*}
\proj{\psi}=\frac{1}{2^n}
\sum_{\vec{s},\vec{t}\in \mathbb{F}^n_2}
\psi_{\vec{s},\vec{t}}X^{\vec{s}}Z^{\vec{t}},
\end{eqnarray*}
where $\psi_{\vec{0},\vec{0}}=1$ and 
$\sum_{(\vec{s},\vec{t})\neq (\vec{0},\vec{0})}|\psi_{\vec{s},\vec{t}}|^2=2^n-1$. 
The Pauli rank is defined as the number of nonvanishing coefficients $\psi_{\vec{s},\vec{t}}$,
 that is, 
 \begin{eqnarray}
 \chi(\psi):=|\set{(\vec{s},\vec{t})\in\mathbb{F}^{2n}_2| \psi_{\vec{s}, \vec{t}}\neq 0}|.
 \end{eqnarray}
 Then we have the following property for the Pauli rank.
 
\begin{prop}
Given an $n$-qubit pure state $\ket{\psi}$, we have

(i) $2^n\leq \chi(\psi)\leq 4^n$, $\chi(\psi)=2^n$ iff $\psi$ is a stabilizer state.

(ii) $\chi(\psi_1\ot \psi_2)=\chi(\psi_1)\chi(\psi_2)$.
\end{prop}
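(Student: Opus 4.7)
The plan is to pass to the Hermitian Pauli basis (which counts the same Pauli rank as the $X^{\vec{s}}Z^{\vec{t}}$ basis, since the two differ only by fixed $\pm 1,\pm i$ phases per element) and then exploit purity together with the eigenvalue bound $|\bra{\psi}P\ket{\psi}|\leq 1$. For part (i), I would write $\proj{\psi}=2^{-n}\sum_{P\in P^n}c_P P$ with $c_P=\bra{\psi}P\ket{\psi}\in\mathbb{R}$. Using the orthogonality $\Tr{PQ}=2^n\delta_{P,Q}$ together with purity $\Tr{\rho^2}=1$ yields $\sum_P c_P^2=2^n$. Since $c_I=1$ and $|c_P|\leq 1$ for every $P$ (because $P$ has spectrum $\{\pm 1\}$), the identity $\sum_{P\neq I}c_P^2=2^n-1$ with each summand in $[0,1]$ forces at least $2^n-1$ non-identity coefficients to be nonzero, giving $\chi(\psi)\geq 2^n$; the upper bound $\chi(\psi)\leq 4^n=|P^n|$ is immediate.

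For the equality characterization, $\chi(\psi)=2^n$ can hold only when each of the $2^n-1$ nonzero off-identity coefficients saturates $|c_P|=1$, i.e., $\ket{\psi}$ is a $\pm 1$ eigenvector of that $P$. Absorbing the signs, let $S$ be the set of signed Paulis stabilizing $\ket{\psi}$, so $|S|=2^n$, and show that $S$ is a maximal abelian subgroup of the Pauli group. Commutativity: if $P,Q\in S$ anticommuted, then $PQ\ket{\psi}=-QP\ket{\psi}$, yet both sides must equal $\pm\ket{\psi}$, forcing $\ket{\psi}=0$, a contradiction. Closure: $PQ$ is again a signed Pauli and satisfies $(PQ)\ket{\psi}=\pm\ket{\psi}$, so $|c_{PQ}|=1$. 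This exhibits $\ket{\psi}$ as the joint $+1$ eigenstate of a maximal abelian Pauli subgroup, i.e., a stabilizer state. The converse is standard: a stabilizer state has exactly $2^n$ stabilizers and $c_P=0$ for every $P$ outside them, so $\chi(\psi)=2^n$.

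Part (ii) follows from the multiplicativity of the Pauli basis: the coefficient of $(X^{\vec{s}_1}Z^{\vec{t}_1})\otimes(X^{\vec{s}_2}Z^{\vec{t}_2})$ in $\proj{\psi_1}\otimes\proj{\psi_2}$ is $\psi^{(1)}_{\vec{s}_1,\vec{t}_1}\,\psi^{(2)}_{\vec{s}_2,\vec{t}_2}$, and since these bipartite operators are linearly independent, the support of the product coefficients equals the Cartesian product of the individual supports; the cardinalities therefore multiply. The main obstacle is the equality direction of (i)—rigorously assembling the $2^n$ saturating Paulis into a maximal abelian subgroup—which essentially rederives the stabilizer structure theorem and is handled by the anticommutation/closure argument above; everything else in the proposition reduces to one-line observations about the Bloch expansion and tensor products.
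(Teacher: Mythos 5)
Your proof is correct, and in the key forward direction of (i) it takes a genuinely different route from the paper. Where you argue commutativity by the eigenvector contradiction ($PQ\ket{\psi}=-QP\ket{\psi}$ forces $\ket{\psi}=0$), the paper instead uses purity of $\proj{\psi}=\proj{\psi}^2$ and a coefficient-counting argument: an anticommuting pair would make the two cross terms cancel, leaving some Pauli coefficient of modulus strictly less than $1$, contradicting the saturated expansion. Your route is cleaner and more transparent; the paper's stays entirely at the level of the Bloch coefficients. The second difference is the final step: you identify $\ket{\psi}$ as the joint $+1$ eigenstate of a maximal abelian signed-Pauli subgroup and then declare it a stabilizer state by the standard structure theorem, whereas the paper, working from its definition of stabilizer states as $U\ket{0}^{\otimes n}$, rederives that theorem explicitly — it inductively conjugates the group by Clifford unitaries to bring the generators to $Z$-type form, repeats the purity argument to fix the signs multiplicatively, and thereby exhibits the Clifford $U$ directly. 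So the only place you lean on outside knowledge is exactly the part the paper chooses to spell out; if self-containment is desired, add the two-line argument that an order-$2^n$ abelian group of signed Paulis not containing $-I$ has $n$ independent generators which a Clifford maps to $Z_1,\dots,Z_n$, whence $\ket{\psi}=U\ket{0}^{\otimes n}$. Your explicit justifications of the lower bound $2^n\leq\chi(\psi)$ (via $\sum_P c_P^2=2^n$ with $|c_P|\leq 1$) and of multiplicativity in (ii) fill in details the paper dismisses as immediate from the definition, which is a modest gain in rigor at no cost.
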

\begin{proof}

(i) $2^n\leq \chi(\psi)\leq 4^n$ follows directly  from the definition. 
We only need prove  $\chi(\psi)=2^n$ iff $\psi$ is a stabilizer state. 
In the backward direction, if $\psi$ is a stabilizer state, then it can be written as 
$\proj{\psi}=\prod^n_{i=1}\frac{I+P_i}{2}$, where $P_i\in P^n $  and $P_i$ are commuting with each other. 
Thus, the Pauli rank of $\ket{\psi}$  is $2^n$. In the forward direction, 
if $\chi(\psi)=2^n$, then it can be represented as 
$\proj{\psi}=\frac{1}{2^n}\sum^{2^n}_{i=1} P_i$ where
$P_1=I$,  each $P_i\in P^n$, and $P_i, P_j$ are not equivalent in the sense that $\Tr{P_iP_j}=0$ for any $i\neq j$.
First, we show that $P_iP_j=P_jP_i$ for any $i,j$. Otherwise, there
exists $i_0,j_0$ such that 
$P_{i_0}P_{j_0}=-P_{j_0}P_{i_0}$. 
Since $\psi$ is a pure state, then
\begin{eqnarray}\label{eq:arg2}
\proj{\psi}=\proj{\psi}^2
=\frac{1}{4^n}\sum^{2^n}_{i,j=1}P_iP_j
=\frac{1}{4^n}\sum^{2^n}_{\substack{i,j=1,\\ \set{i,j}\neq \set{i_0, j_0}}}P_iP_j
=\frac{1}{2^n}\sum^{2^n}_{k=1}\frac{n_k}{2^n}P_k,
\end{eqnarray}
where the third inequality comes from the fact that $P_{i_0}P_{j_0}=-P_{j_0}P_{i_0}$. 
Since each $P_iP_j$ is equal to $i^{c_{ijk}}P_k$ for some $k$ and $n_k$ is the summation the these phases $i^{c_{ijk}}$, thus 
\begin{eqnarray}\label{eq:arg}
\sum^{2^n}_{k=1}|n_k|\leq |\set{(i,j)| 1\leq i,j\leq 2^n, \set{i,j}\neq \set{i_0, j_0}}|=4n-2.
\end{eqnarray}
Then 
there is some $k_0$ such that $|n_{k_0}|\leq 2^n-1$, i.e., $\frac{|n_{k_0}|}{2^n}<1$, which contradicts with the representation of 
$\psi$. Thus, $P_i$ are commuting with each other. Next, we prove that  this set of $\set{P_i}^{2^n}_{i=1}$ can be generated by 
some subset $S$  up to $\pm$ sign. 
For any $P_i$ not equal to identity, e.g.,  $P_2$, then there exists 
$U_1\in \mathcal{C}l_n$ such that $U_1P_2U^\dag_1= Z\ot I_{n-1} $,
and  for any $i$, $U_1P_iU^\dag_1$ must have the form 
$Z^{a_i}\ot P_{i, n-1}$, where $P_{i, n-1}\in P^{n-1}$ and they are commuting with each other. 
The generating set $S=\set{ Z\ot I_{n-1}}$. 
For some $P_{i,n-1}$ not equal to identity, e.g., $Z^{a_3}\ot P_{3,n-1 }$, there exists $U_2\in \mathcal{C}l_{n-1}$ such that 
$U_2U_1P_3U^\dag_1U^\dag_2=Z^{a_3}\ot Z\ot I_{n-2}$, and  $U_2U_1P_iU^\dag_1U^\dag_2=Z^{a_i}\ot Z\ot P_{i, n-2}$. Then 
the generating set is updated to
$S=\set{Z\ot I_{n-1}, Z^{a_3}\ot Z\ot I_{n-2} }$. 
Let us repeat the above process for another  $n-2$ times,  finially  we will get some
generating set $S=\set{g_i}^n_{i=1}$, where $g_i=Z^{c_{i,1}}\ot \ldots \ot Z^{c_{i,i-1}} \ot Z\ot I_{n-i}$. 
Moreover, the remaining Pauli operators must have the form 
$\pm \ot^n_{i=1}Z^{a_i}$, which can be generated by  the generating set $S$ up to $\pm$ sign. 
That is, there is a Clifford unitary map $U$ that maps $\proj{\psi}$
to another pure state $\proj{\psi'}=\frac{1}{2^n}\sum_{\vec{a}\in \mathbb{F}^n_2}c_{|\vec{a}|}Z^{\vec{a}}$ where 
$c_{|\vec{a}|}=\pm 1$,  $|\vec{a}|:=\sum_i a_i2^{i-1}$ and $c_0=1$. 
Repeating  the argument  \eqref{eq:arg2} and \eqref{eq:arg} for the pure state $\proj{\psi'}$, we have $c_{|\vec{a}|}=\prod^n_{i=1}c^{a_i}_{2^{i-1}}$. 
Thus $\proj{\psi'}=\prod^n_{i=1}\frac{I+c_{2^{i-1}}Z_i}{2}$ where $Z_i$ denotes the Pauli $Z$ operator acting on the $i$th qubit. 
Therefore
$\psi$ is a stabilizer state.

(ii) This property follows directly from the definition.

\end{proof}

Based on the above proposition and the fact that the Pauli rank is invariant under conjugation by Clifford unitaries, it is easy to see that the
Pauli rank is a good candidate to quantify the magic in a state. Here, using the Pauli rank as a magic monotone is advantageous because it is easier to compute than previous magic monotones \cite{BravyiPRL2016,HowardPRL2017,Veitch2014}, which typically involve a minimization over all stabilizer states.

\end{document}